%
\documentclass[runningheads]{llncs}
\usepackage[T1]{fontenc}
%
\usepackage{graphicx}

%
%

\usepackage[utf8]{inputenc}
\usepackage{mathtools, nccmath}

\usepackage{array}
\usepackage{amsmath,amssymb, color, tikz, mathtools}
\usepackage[ruled,vlined]{algorithm2e}
\usepackage{xcolor}
\usepackage{pgfplots}
\usepackage{pgfplotstable}
\usepackage{thmtools}
\usepackage{thm-restate}
\usepackage{comment}
\usepackage{hyperref}


\newcommand{\mathify}[1]{\ifmmode{#1}\else\mbox{$#1$}\fi}
\newcommand{\abs}[1]{\mathify{\left| #1 \right|}}

\renewcommand{\deg}{\mathrm{deg}}

\newcommand{\Adv}{\mathrm{Adv}}
\newcommand{\MM}{\mathrm{MM}}
\newcommand{\CG}{\mathrm{CG}}

\newcommand{\RC}{\mathrm{RC}}
\newcommand{\EC}{\mathrm{EC}}
\newcommand{\FC}{\mathrm{FC}}
\newcommand{\C}{\mathrm{C}}
\newcommand{\SA}{\mathrm{SA}}
\newcommand{\nR}{\mathrm{noisyR}}
\newcommand{\gm}{\mathrm{GapMaj}}

\newcommand{\bs}{\mathrm{bs}}
\newcommand{\fbs}{\mathrm{fbs}}
\newcommand{\s}{\mathrm{s}}


\newcommand{\mc}[1]{\mathcal{#1}}
\newcommand{\mb}[1]{\mathbb{#1}}

\DeclarePairedDelimiter\floor{\lfloor}{\rfloor}
\DeclarePairedDelimiter\mVert{\lVert}{\rVert}
\DeclarePairedDelimiter\pVert{\vert}{\vert}


\begin{document}
\title{On query complexity measures and their relations for symmetric functions}
%
%
\author{Rajat Mittal\inst{1} \and
Sanjay S Nair\inst{1}\orcidID{0009-0008-4187-9935} \and
Sunayana Patro\inst{2}}
\authorrunning{Mittal et al.}
%
\institute{Indian Institute of Technology, Kanpur \\ \email{rmittal@iitk.ac.in} \and
International Institute of Information Technology, Hyderabad }
\maketitle              
\begin{abstract}

The main reason for query model's prominence in complexity theory and quantum computing is the presence of concrete lower bounding techniques: polynomial and adversary method. There have been considerable efforts to give lower bounds using these methods, and to compare/relate them with other measures based on the decision tree.

We explore the value of these lower bounds on quantum query complexity and their relation with other decision tree based complexity measures for the class of symmetric functions, arguably one of the most natural and basic sets of Boolean functions. We show an explicit construction for the dual of the positive adversary method and also of the square root of private coin certificate game complexity for any total symmetric function. This shows that the two values can't be distinguished for any symmetric function. Additionally, we show that the recently introduced measure of spectral sensitivity gives the same value as both positive adversary and approximate degree for every total symmetric Boolean function.

Further, we look at the quantum query complexity of Gap Majority, a partial symmetric function. It has gained importance recently in regard to understanding the composition of randomized query complexity. We characterize the quantum query complexity of Gap Majority and show a lower bound on noisy randomized query complexity (Ben-David and Blais, FOCS 2020) in terms of quantum query complexity.

Finally, we study how large certificate complexity and block sensitivity can be as compared to sensitivity for symmetric functions (even up to constant factors). We show tight separations, i.e., give upper bounds on possible separations and construct functions achieving the same. 

\keywords{Computational Complexity  \and Quantum Physics \and Query Complexity}
\end{abstract}

\section{Introduction}
\label{ch:intro}

The model of query complexity has been essential in the development of quantum algorithms and their complexity: many of the famous quantum algorithms can be best described in this model~\cite{DBLP:conf/focs/Simon94,DBLP:conf/stoc/Grover96}
and most of the lower bounds known  on complexity of algorithms are obtained through this model~(\cite{beals_quantum_1998,ambainis_quantum_2000}). 

The power of this model for analysing complexity of quantum algorithms arises from the fact that there are concrete mathematical techniques to give lower bounds in this framework. There are two main ways to give lower bounds in this framework. 
\begin{itemize}
    \item Approximate degree and its variants: techniques motivated from capturing the success probability of the algorithm as polynomials~(\cite{beals_quantum_1998}).
    \item Adversary bound: techniques motivated from the adversary method and its semi-definite programming characterization~(\cite{ambainis_quantum_2000,spalek_all_2006,hoyer_negative_2007,lee_quantum_2011,DBLP:conf/stoc/AarBKRT21}).
\end{itemize}

These lower bounds have been motivated by complexity measures of Boolean functions introduced to study deterministic query (decision tree) and randomized query complexity. For deterministic query complexity, measures like Fourier degree, sensitivity, block sensitivity and certificate complexity have been studied extensively~\cite{buhrman_complexity_2002,Nisan1994} (all four are known to lower bound deterministic tree complexity). Similarly randomized certificate complexity is known to be a lower bound randomized query complexity. In last few years many new measures have been introduced to understand these query complexity measures~\cite{ben-david_tight_2020,Chakraborty2022-om}. For example, $\nR$ was introduced to understand the composition of randomized query complexity~\cite{ben-david_tight_2020}, and recently Chakraborty et. al introduced the notion of certificate games whose public coin version is a lower bound on certificate as well as randomized query complexity~\cite{Chakraborty2022-om}. Huang's landmark result~\cite{huang_induced_2019} shows that all these measures are polynomially related to sensitivity.

A natural question: how do these measures relate to each other? Huang~\cite{huang_induced_2019} tells us that these measures are polynomially related. Though, can we figure out what exponent is needed to bound one complexity measure by another (the exponent will depend on these complexity measures)? A lot of work has been done on these relations~\cite{Chakraborty2022-om,DBLP:conf/stoc/AarBKRT21}. (A very nice table with possible separations is given in Aaronson et al.~\cite{DBLP:conf/stoc/AarBKRT21}.)

Let's ask a different question, can we compare these quantities for special class of functions? One of the simplest (and well studied) type of functions are the class of symmetric functions; the output of a symmetric function only depends on the Hamming weight of the input. This class contains many of the natural functions (OR, AND, MAJORITY, PARITY) and has been studied extensively in theoretical computer science. More specifically, related to quantum query complexity, Paturi~\cite{DBLP:conf/stoc/Paturi92} characterized the bounded error approximate degree for any symmetric function. de~Wolf~\cite{DBLP:journals/qic/Wolf10} showed a tight bound for approximate degree with small error by constructing optimal quantum query algorithms. 

The main focus of this work is to examine different lower bound techniques known for quantum query complexity and their relation with other complexity measures for symmetric functions. See the survey by Buhrman and de~Wolf~\cite{buhrman_complexity_2002} for a list of complexity measures based on the query model (we look at these measures when the function is symmetric). For the class of transitive symmetric functions, a study has been initiated by Chakraborty et al.~\cite{Chakraborty2021Separations}.

\subsection{Our results}

For all results in this paper, assume $\epsilon$ to be a constant less than $1/2$.

We have discussed two different lower bounds on bounded error quantum query complexity of a Boolean function: approximate degree and positive adversary.

Our first result shows that for any \emph{total} symmetric function, the positive adversary bound is asymptotically identical to square root of the certificate game complexity~\cite{Chakraborty2022-om}. 
We show it by constructing an explicit solution of the dual of adversary semidefinite program (minimization version, shown in Definition~\ref{MMfprimeDefinition}) which works for the square root of certificate game complexity too.

\begin{theorem}
\label{adversarySolution}
Let $f:\{0,1\}^n \rightarrow \{0,1\}$ be a \emph{total} symmetric Boolean function. 
\begin{equation}
    \Adv^+(f) = \Theta\left(\sqrt{\CG(f)}\right) = \Theta(\sqrt{t_f \cdot n}).
\end{equation}
Here $t_f$ is the minimum $t$ such that $f$ is constant for Hamming weights between $t$ and $n-t$
\end{theorem}

The article~\cite{DBLP:journals/corr/abs-1708-00822} introduced the measure expectational certificate complexity to upper bound Las-Vegas randomized query complexity. It had a very similar optimization program to square root of certificate game complexity (only difference being not having the constraint that weights are less than $1$). This construction implies that bound on weights in the expectational certificate complexity makes it different from square root of certificate game complexity. 

Even though previous results show the value of $\Adv^+(f)$ using the upper bound on $Q_{\epsilon}(f)$, we give an explicit upper bound using the min-max formulation of Adversary bound, which is found to be $\sqrt{t_f\cdot n}$, where $t_f$ is the minimum $t$ such that $f$ is constant for Hamming weights between $t$ and $n-t$. de Wolf~\cite{DBLP:journals/qic/Wolf10} has already shown quantum algorithms with the same quantum query complexity. This shows that $Q_{\epsilon}(f)$ (bounded error quantum query complexity) and $\Adv^+(f)$ are also asymptotically identical to square root of certificate game complexity for \emph{total} symmetric functions. The bound on adversary method was also shown by~\cite{DBLP:journals/toc/AaronsonA14}.

Recently, a lower bound on adversary, called spectral sensitivity, has gained lot of attention and is shown to be a lower bound for approximate degree too~\cite{DBLP:conf/stoc/AarBKRT21}. For any \emph{total} symmetric functions, spectral sensitivity gives the same bound as other two techniques.

\begin{figure}
    \centering
    \includegraphics[width=0.9\textwidth]{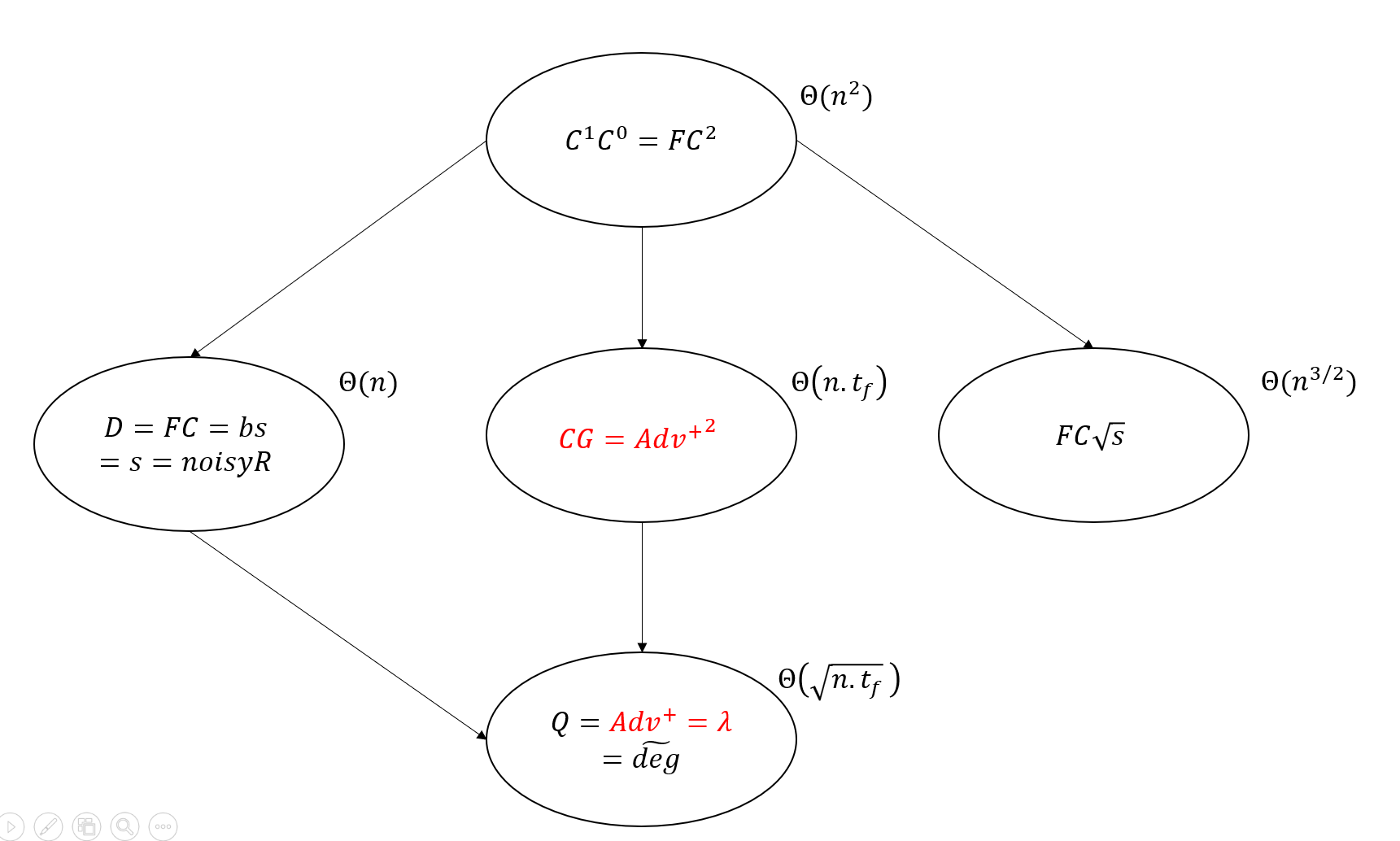}
    \caption{The asymptotic relations on complexity measures of \emph{total} symmetric functions. We show, 
    1) $\sqrt{\CG(f)}$ and $\lambda(f)$ are asymptotically identical to $\Adv^+(f)$; 
    2) $Q_{\epsilon}(f) = O(\nR_{\epsilon}(f)\cdot\sqrt{n})$; 
    3) The relation between $\bs(f)$ ($\C(f)$) and $\s(f)$ are obtained up to constants.
    }
    \label{fig:complexityHierarchyFinal}
\end{figure}

\begin{theorem}
\label{symmetricQueryBounds}
Let $f:\{0,1\}^n \rightarrow \{0,1\}$ be a \emph{total} symmetric Boolean function. Let $\lambda(f)$ denote the spectral sensitivity of $f$ respectively, then
$\lambda(f) = \Theta\left(\sqrt{n\cdot t_f}\right).$

\end{theorem}

This shows that $\lambda(f)$ is identical to the known values of $Q_{\epsilon}(f)$ and $\Adv^+(f)$ for \emph{total} symmetric functions. The approximate degree is also shown to be $\Theta(\sqrt{t_f  \cdot n})$ by Paturi~\cite{DBLP:conf/stoc/Paturi92}. This shows that almost all lower bounds on quantum query complexity in case of symmetric functions give the same value, $\Theta(\sqrt{t_f \cdot n})$.
As far as we know, this only leaves one lower bound for quantum query complexity, known as quantum certificate complexity~\cite{DBLP:conf/coco/Aaronson03,DBLP:journals/cjtcs/KulkarniT16}. It is known that this lower bound is equal to $\Theta(\sqrt{n})$ for any symmetric function.   

Continuing, we examine the quantum query complexity of Gap Majority problem, a partial symmetric Boolean function. This problem gained a lot of attention recently due to the work of Ben-David and Blais~\cite{ben-david_tight_2020} for proving results about composition of randomized query complexity. We prove the following theorem about Gap Majority.

\begin{theorem} \label{gapMajQuery}
Let $\gm_n$ denote the Gap Majority function on $n$ variables and $Q_\epsilon (f)$ denote the quantum query complexity with error $\epsilon$. Then,
\[ Q_{\epsilon} (\gm_n) = \Theta(\sqrt{n}) ,\]
\end{theorem}

We prove the result by giving a quantum algorithm for $\gm_n$ based on quantum counting and proving the tight lower bound using adversary method (lower bound also follows from~\cite{NW99}).  

Ben-David and Blais~\cite{ben-david_tight_2020} recently introduced noisy randomized query complexity (denoted by $\nR_\epsilon(f)$); they showed it to be a  lower bound on the bounded error randomized query complexity (for definition, see~\cite{ben-david_tight_2020}). They also proved that separating noisy randomized query complexity with randomized query complexity is equivalent to giving counterexample for composition of randomized query complexity.

Theorem~\ref{gapMajQuery} allows us to prove a lower bound on the noisy randomized query complexity in terms of quantum query complexity.
\begin{corollary} \label{noisyQu}
Let $\nR_{\epsilon}(f)$ ($Q_\epsilon (f)$) denote the noisy randomized query complexity (quantum query complexity) with error $\epsilon$ respectively. For any \emph{total} Boolean function $f:\{0,1\}^n \rightarrow \{0,1\}$, 
\[ Q_{\epsilon}(f) = O(\nR_{\epsilon}(f)\cdot\sqrt{n}). \] 
\end{corollary}

From the definition of noisy randomized query complexity, it is a lower bound on randomized query complexity. Since quantum query complexity is a lower bound on randomized query complexity, Corollary~\ref{noisyQu} provides a relation between these two lower bounds. All other known lower bounds on randomized query complexity are known to be lower bounds on $\nR$.

We look at the block sensitivity, sensitivity and certificate complexity of \emph{total} symmetric functions too. Since all these measures are $\Theta(n)$ ($n$ - parity) for these functions, we upper bound the separations possible even up to constants. We show that these bounds are tight by constructing functions which achieve these separations. 

\begin{theorem}
\label{C_bs_s_Bounds}
For a symmetric Boolean function $f$, let $\s(f)$,$\bs(f)$ and $C(f)$ denote the sensitivity, block sensitivity and certificate complexity of $f$ respectively.
\begin{enumerate}
    \item $\C(f) \le 2\cdot \s(f)$, there exists $f$ such that $\C(f)= 2 \cdot\s(f)-4$.
    \item $\frac{\bs(f)}{\s(f)} \leq \frac{3}{2}$, there exists $f$ such that $\bs(f) = 3n/4$ and $s(f) = n/2 + 2$.    
\end{enumerate}

\end{theorem}

The known relations on complexity measures for symmetric functions are illustrated in Fig~\ref{fig:complexityHierarchyFinal}.
The preliminaries required for our results are given in Appendix~\ref{ch:chap2}. Proof of Theorem~\ref{adversarySolution} is detailed in Section~\ref{ch:chap3}.  The results about Gap Majority and its consequence are given in Section~\ref{ch:chap4}. The bound on spectral sensitivity is proven in Appendix~\ref{spec_sens_symm}. The proof of Theorem~\ref{C_bs_s_Bounds} is provided in the appendix~\ref{ch:chap5}.

\section{Lower bounds on quantum query complexity for \emph{total} symmetric functions}
\label{ch:chap3}

In this section we first construct an optimal solution for the min-max formulation of the adversary bound. It turns out that a similar construction gives an optimal bound on the private coin version of certificate game complexity.



\paragraph{\texorpdfstring{$\Adv^+(f)$}{} for \emph{total} symmetric Boolean functions:} 
~~ \\

For a Boolean function $f$, we define $t_f$ to be the minimum $t$ such that $f$ is constant between $t$ and $n-t$. We use the min-max formulation of Adversary bound ($\MM(f)$) and explicitly show that $\Adv^+(f) = O(\sqrt{t_{f}\cdot n})$. 


\label{sec:adv_upper}
\begin{theorem}
\label{ExplicitSoln}
For any \emph{total} symmetric Boolean function f, $\Adv^+(f) = O(\sqrt{t_{f}\cdot n})$.
\end{theorem}

\begin{figure}
	\centering
	\includegraphics[width=0.6\textwidth]{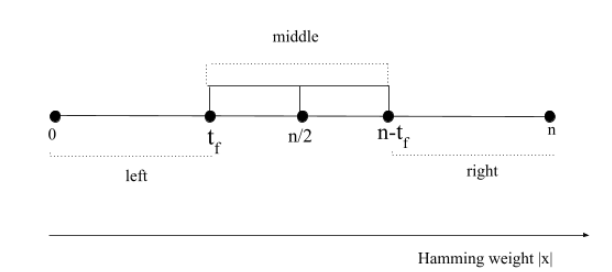}
	\caption{A general \emph{total} symmetric Boolean function viewed on Hamming weights, it is constant in the range $[t_f, n-t_f]$ by definition.}
	\label{f:advExplicit}       
\end{figure}

\begin{proof}

Using $t_f$, $\abs{x}$ can fall in 3 regions; left ($L$), right ($R$)  and middle ($M$). From the definition of $\MM(f)$ given in Appendix~\ref{PositiveAdv}, we need to define a weight function $w$.

\begin{definition}
\label{MMWeightDef}
    We define a weight function w(x,i):
    \begin{itemize}
     \item For $L$ ($\abs{x} < t_f)$:
 \begin{center}
 $w(x,i)$ = $\begin{cases}
    $$\sqrt{n/t_{f}}$$, & \text{if $x_i = 1$ }\\
    $$\sqrt{t_{f}/n}$$, & \text{if $x_i = 0$.}\\
    \end{cases}$
 \end{center} 
 
 \item For $R$ ($\abs{x} > n - t_f)$ :
 \begin{center}
 $w(x,i)$ = $\begin{cases}
    $$\sqrt{n/t_{f}}$$, & \text{if $x_i = 0$ }\\
    $$\sqrt{t_{f}/n}$$, & \text{if $x_i = 1$.}\\
    \end{cases}$
 \end{center}
 
 \item For $M$ ($t_f \leq \abs{x} \leq n-t_f )$ : 
\begin{center}
    
     For any $t_f$ $i$'s such that  $x_i = 1, w(x,i) = \sqrt{n/t_{f}}$. \\
     For any $t_f$ $i'$s such that  $x_i = 0, w(x,i) = \sqrt{n/t_{f}} $.
 \end{center}
\end{itemize}

\end{definition}

For the constraint $\sum_{i : x_{i} \neq y_{i}} \sqrt{w(x,i)\cdot w(y,i)} \geq 1 $ for all $x,y : f(x) \neq f(y)$, following cases arise:
\begin{itemize}

\item $x\in L$ and $y \in R$: For any $x \in L$ and $y \in R$, there are at least $(n-2t_{f}+2)$ indices such that $x_i = 0$ and $y_i = 1$.
  \begin{align*}
           \sum_{i : x_{i} \neq y_{i}} \sqrt{w(x,i)\cdot w(y,i)} & \geq (n-2t_{f}+2)\sqrt{t_{f}/n}
      \\& \geq 1 &&\text{(true for any $0 < t_f \leq n/2$).}
      \end{align*}
      
\item $x \in L$ and $y \in M$; $x \in R$ and $y \in M$; $x,y \in L$; $x,y \in R$:
      We know that for any $y \in M$, there are at least $t_f$ indices such that $y_i = 1$. We also know that for any $x \in L$, the maximum number of indices such that $x_i = 1$ is $<t_f$. Then $\exists $ at least one $i$ such that $x_i = 0$ and $y_i = 1$. For this index $i$, we know that $w(x,i) = \sqrt{t_{f}/n}$ and $w(y,i) = \sqrt{n/t_{f}}$. Thus $\sum_{i : x_{i} \neq y_{i}} \sqrt{w(x,i)\cdot w(y,i)} \geq 1 $ for all $x \in L$ and $y \in M$.
      Similar argument holds for $x \in R$  and $y \in M$  or $x,y \in L$ or $x,y \in R $.
      
\item $x,y \in M$: There are no $x,y \in M$ such that $f(x) \neq f(y)$.

\end{itemize}

With $w$ as the weight scheme, we can see that the value of $\max_{x} \Sigma_{i} w(x,i) $ becomes $\Theta(\sqrt{t_{f} \cdot n})$. Thus, $\Adv(f) = \MM(f) = O(\sqrt{t_{f} \cdot n})$. 
Since we already from Theorem~\ref{symmetricQueryBounds} know that $\Adv(f) = \Theta(\sqrt{t_{f} \cdot n})$, this weight scheme $w$ is an explicit solution for the same.

\end{proof}

\paragraph{Certificate game complexity for symmetric functions:}
~~\\


The article~\cite{Chakraborty2022-om} defined certificate game complexity in various settings. The definition in case of private coin setting is very similar to the min-max definition of the adversary method (Definition~\ref{MMfDefinition}). In particular, the optimization program for the square root of certificate game complexity does not have a square root in the constraints.


\begin{definition}
\label{MMfprimeDefinition}
Let $f:{S \to\{0,1\}}$ where $S \subseteq \{0,1\}^n $ be a Boolean function.
Let $w$ be a weight function, then 

\begin{equation}
\begin{split}
    \sqrt{\CG(f)} = 
    \min_{w} \max_{x \in \text{Dom(f)}} \Sigma_{i \in [n]} w(x,i) \\
    \text{s.t} \sum_{i : x_{i} \neq y_{i}} w(x,i)w(y,i) \geq 1, & \text{ \; $\forall x, y $ : $f(x) \neq f(y)$} \\
         w(x,i) \geq 0, & \text{ \; $\forall x \in Dom(f), i \in [n]$}.
\end{split}
\end{equation}
\end{definition}

 It turns out that the same bound can be obtained for $\sqrt{\CG(f)}$ (with the same explicit solution as the one for the adversary bound). Together with the result of previous section, we get Theorem~\ref{adversarySolution}.

\begin{lemma}
\label{mmPrimeLemma}
    For any \emph{total} symmetric Boolean function f, $\sqrt{\CG(f)} = O(\sqrt{t_{f}\cdot n})$. 
\end{lemma}
\begin{proof}

We consider the same weighing scheme $w$ used for $\MM(f)$. For the \emph{total} symmetric function f where $t_f$ is the minimum value such that the function value is constant for $(t_f,n-t_f)$, we use a weight function $w(x)$ as defined in Definition~\ref{MMWeightDef}

First, we verify that $w$ satisfies $\sum_{i : x_{i} \neq y_{i}} w(x,i)w(y,i) \geq 1$ for all $x, y  : f(x) \neq f(y)$. A case analysis, similar to proof of Theorem~\ref{ExplicitSoln}, verifies the constraint (notice that the square root does not matter because the proof only requires at least one index such that $w(x,i) w(y,i) \geq 1$ ).


      
      

The objective value $\max_{x} \Sigma_{i} w(x,i) $ remains $\Theta(\sqrt{t_{f} \cdot n})$. Thus, $\sqrt{\CG(f)} = O(\sqrt{t_{f} \cdot n})$. 

\end{proof}

\begin{proof}[Proof of Theorem~\ref{adversarySolution}]
Suppose $f$ is symmetric.
Since it is known that $\sqrt{\CG(f)}$ is lower bounded by $\Adv^+(f)$, Lemma~\ref{mmPrimeLemma} shows that $\sqrt{\CG(f)}=\Theta(\Adv^+(f))$. The lower bound on $\Adv^+(f)$ follows from~\cite{DBLP:journals/toc/AaronsonA14}.
\end{proof}



It is easy to see that the other certificate game complexity measures, $\CG^{pub}(f) = \CG^*(f) = \CG^{ns}(f)$ are $\Theta(n)$ for any total symmetric function $f$~\cite{Chakraborty2022-om}.

Expectational certificate complexity ($\EC(f)$)~\cite{DBLP:journals/corr/abs-1708-00822}, looks misleadingly similar to $\MM(f)$. 

Since $\EC(f) \geq \FC(f)$ (\cite{DBLP:journals/corr/abs-1708-00822}[Lemma 7]), $\EC(f) = \Omega(n)$ for all symmetric functions $f$. Also Theorem~\ref{ExplicitSoln} shows that $\MM(f)=O(\sqrt{t_{f}\cdot n})$. There are two differences between $\EC(f)$ and $\MM(f)$: objective function and the restriction on the weight scheme. Removing the square root in the objective function doesn't change the upper bound(Lemma~\ref{mmPrimeLemma}). Though, restricting $w(x,i)$ to be in between $0$ and $1$ changes the bounds drastically. In particular, we know that there are lot of symmetric functions for which $\EC$ is asymptotically bigger than $\sqrt{\CG}$.


 \section{Quantum query complexity for Gap majority function}
\label{ch:chap4}
The Gap Majority function (Definition~\ref{defGM}) has been recently used by Ben-David and Blais~\cite{ben-david_tight_2020} to understand the composition of randomized query complexity. They used it to characterize the noisy randomized query complexity, a lower bound on randomized query complexity, and were able to show multiple results on composition using this noisy version. As a first step towards understanding the quantum query complexity of partial symmetric functions, we compute the quantum query complexity of $\gm_n$ and use it show new lower bounds on noisy randomized query complexity.

The main result of this section, Theorem~\ref{gapMajQuery}, shows that $Q_\epsilon(\gm_n) = \Theta(\sqrt{n})$. As a corollary, we obtain $Q_\epsilon(f) = O(\nR_{\epsilon}(f)\cdot\sqrt{n})$ for any total Boolean function. It is not known to be true for partial Boolean functions. 

\begin{proof}[Proof of Corollary~\ref{noisyQu}]
Ben-David and Blais~\cite[Theorem 4]{ben-david_tight_2020} showed that for any Boolean function $f$, 
\begin{equation}
\begin{split}
\label{eqn:noisy}
    R_\epsilon(f \circ \gm_n) = \Theta(\nR_\epsilon(f)\cdot n) \\  \Rightarrow Q_\epsilon(f \circ \gm_n) = O(\nR_\epsilon(f)\cdot n).
\end{split}
\end{equation}

Lee et al.~\cite[Theorem 1.1, Lemma 5.2]{lee_quantum_2011} showed that for any total Boolean function $f$, 
\begin{equation}
\label{eqn:composition}
   Q_\epsilon(f) \cdot Q_\epsilon(\gm_n) = \Theta(Q_\epsilon(f \circ \gm_n)).
\end{equation}

We prove in Theorem~\ref{gapMajQuery} that 
\begin{equation}
    Q_\epsilon(\gm_n) = \Theta(\sqrt{n}).
\end{equation}

Combining the result from Theorem~\ref{gapMajQuery} and (Eq~\ref{eqn:noisy}) and (Eq~\ref{eqn:composition}), we get the required result for any total Boolean function $f$, 
\begin{equation}
\label{eqn:newbound}
    Q_\epsilon(f) = O(\nR_\epsilon(f)\cdot\sqrt{n}).
\end{equation}

\end{proof}



Comparing our bound with previously known results, Ben-David et al.~\cite[Lemma 38]{ben-david_tight_2020} show that $\nR_\epsilon(f) = \Omega(\RC(f))$ (notice $\fbs(f) = \Theta(\RC(f))$~\cite{tal_properties_2013}). Looking at previously known bounds on $\RC$ using quantum query complexity~\cite[Table 1]{DBLP:conf/stoc/AarBKRT21}, we know that $Q_\epsilon(f) = O(\RC(f)^3)$. The best possible bound on $\nR$ in terms of $Q_{\epsilon}$ becomes

\begin{equation}
\label{eqn:existingNoisyBound}
    Q_\epsilon(f) = O(\nR_\epsilon(f)^3).
\end{equation}

Corollary~\ref{noisyQu} gives a better bound than the existing bound when $\nR_\epsilon(f) = \Omega(n^\frac{1}{4})$.

\subsection{Proof of Theorem~\ref{gapMajQuery}}

We start by showing that there exists a quantum algorithm that can compute the quantum query complexity for Gap Majority in $\Theta(\sqrt{n})$ steps, thus giving us the upper bound for $Q_\epsilon(\gm_n)$.

The main tool is the following lemma from Aaronson and Rall~\cite[Theorem 1]{DBLP:conf/soda/AaronsonR20} to estimate the Hamming weight of an input (a modification of quantum approximate counting by Brassard et al.~\cite[Theorem 15]{Brassard_2002}).

\begin{lemma}[Restatement of Theorem 1 from~\cite{DBLP:conf/soda/AaronsonR20}] 
\label{AppCount}
Let $\epsilon > 0$ and $x \in \{0,1\}^n$ be the input whose Hamming weight we want to estimate, and $t$ be the actual Hamming weight of $x$. Given query access to an input oracle for $x$ and an allowed error rate $\delta > 0$, there exists a quantum algorithm that outputs an estimate $t'$ with probability at least $1- \epsilon$ satisfying 
\begin{equation*}
     (1-\delta)t \leq t' \leq (1+\delta)t.
 \end{equation*}
 The above algorithm uses $O(\frac{1}{\delta}\sqrt{n/t})$ queries where the constant depends on $\epsilon$.
\end{lemma}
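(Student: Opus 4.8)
The plan is to derive this from quantum amplitude estimation, which is the engine behind both the Brassard et al.\ approximate counting procedure and the Aaronson--Rall refinement quoted in the statement. First I would set up the standard counting primitive: using a single query to the input oracle for $x$, one prepares the state $\frac{1}{\sqrt{n}}\sum_{i\in[n]}|i\rangle|x_i\rangle$, so that the probability of observing a $1$ in the answer register equals $p := t/n$, where $t=\abs{x}$. Estimating $t$ to relative error $\delta$ is therefore equivalent to estimating the amplitude $\sqrt{p}$ to relative error $\Theta(\delta)$, and the factor $n$ is a known multiplier that does not affect relative accuracy.

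Next I would invoke amplitude estimation with $M$ applications of the Grover iterate built from this oracle. The standard guarantee is an estimate $\tilde p$ with additive error on the order of $\frac{\sqrt{p(1-p)}}{M}+\frac{1}{M^2}$, achieved with constant success probability. To turn this into the \emph{relative} bound $|\tilde p - p|\le \delta p$, I would require $M=\Omega\!\left(\frac{1}{\delta}\sqrt{(1-p)/p}\right)=\Omega\!\left(\frac{1}{\delta}\sqrt{n/t}\right)$; since each Grover iterate costs $O(1)$ queries, this already yields the claimed query count $O\!\left(\frac{1}{\delta}\sqrt{n/t}\right)$. Finally, I would amplify the constant success probability up to $1-\epsilon$ by running $O(\log(1/\epsilon))$ independent copies and taking the median, absorbing the $\epsilon$-dependence into the hidden constant exactly as the statement permits.

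The hard part is that the target precision $M=\frac{1}{\delta}\sqrt{n/t}$ depends on the unknown quantity $t$ that we are trying to estimate, so we cannot fix $M$ at the outset. The resolution — which is precisely the content of the Aaronson--Rall argument — is an exponential-search / doubling schedule: run amplitude estimation with geometrically increasing values of $M$ (equivalently, with candidate scales for $t$ ranging over $n, n/2, n/4,\dots$) and halt once the running estimate becomes consistent with the precision implied by its own magnitude. One must then argue two things: that this schedule stops at the correct scale, so the final $M$ is $\Theta(\frac{1}{\delta}\sqrt{n/t})$ and the total query cost, being dominated by the last (largest) round, stays $O\!\left(\frac{1}{\delta}\sqrt{n/t}\right)$; and that the failure probability accumulated over the $O(\log n)$ rounds can be controlled, e.g.\ by setting the per-round error to $\epsilon/\mathrm{poly}(\log n)$. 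Since this bookkeeping is carried out in full as Theorem 1 of~\cite{DBLP:conf/soda/AaronsonR20}, I would ultimately invoke their analysis rather than reprove the doubling argument from scratch.
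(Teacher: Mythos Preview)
The paper does not prove this lemma at all: it is stated purely as a black-box citation of Theorem~1 from Aaronson and Rall (and, implicitly, Brassard et al.'s quantum approximate counting), and is then applied directly to bound $Q_\epsilon(\gm_n)$. There is no proof in the paper to compare against.

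Your sketch is a reasonable reconstruction of the Aaronson--Rall argument --- amplitude estimation plus a geometric search over scales to handle the unknown $t$ --- and you correctly identify the main subtlety (that the required $M$ depends on $t$). Since you end by saying you would ``ultimately invoke their analysis rather than reprove the doubling argument from scratch,'' your proposal and the paper's treatment converge: both defer to~\cite{DBLP:conf/soda/AaronsonR20} for the actual proof.
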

The upper bound is a straightforward implication of the previous lemma.

\begin{lemma}
\label{upperBoundGapMaj}
$Q_\epsilon(\gm_n)$ = $O(\sqrt{n})$.
\end{lemma}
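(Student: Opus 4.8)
The plan is to invoke the approximate counting routine of Lemma~\ref{AppCount} to estimate the Hamming weight of the input and then threshold the estimate. Since $\gm_n$ is defined only on inputs of Hamming weight exactly $n/2 - \sqrt{n}$ (output $0$) or $n/2 + \sqrt{n}$ (output $1$), computing the function amounts to deciding which of these two weights the input has. Both candidate weights are $\Theta(n)$, so the query cost of the counting routine will be controlled by the multiplicative precision $\delta$ we demand.

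First I would fix the precision $\delta = 1/\sqrt{n}$ and run the algorithm of Lemma~\ref{AppCount} with this $\delta$ and a constant error parameter, obtaining an estimate $t'$ satisfying $(1-\delta)t \le t' \le (1+\delta)t$ with probability at least $1-\epsilon$, where $t$ is the true Hamming weight. The algorithm then outputs $0$ if $t' < n/2$ and $1$ otherwise.

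Next I would verify correctness by checking that the two possible estimate ranges lie on opposite sides of the threshold $n/2$. For the $0$-input ($t = n/2 - \sqrt{n}$), the estimate is at most $(1+\delta)(n/2-\sqrt{n}) = n/2 - \sqrt{n}/2 - 1 < n/2$; for the $1$-input ($t = n/2 + \sqrt{n}$), the estimate is at least $(1-\delta)(n/2+\sqrt{n}) = n/2 + \sqrt{n}/2 - 1 > n/2$ once $n > 4$. Hence the thresholding is correct whenever the counting estimate falls in its guaranteed range, which happens with probability at least $1-\epsilon$.

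Finally I would bound the query count. By Lemma~\ref{AppCount} the number of queries is $O(\frac{1}{\delta}\sqrt{n/t})$, and with $\delta = 1/\sqrt{n}$ and $t = \Theta(n)$ this becomes $O(\sqrt{n}\cdot\sqrt{2}) = O(\sqrt{n})$, as required. The only delicate point---hardly an obstacle---is the calibration of $\delta$: the computation above shows we need $\delta < 2/\sqrt{n}$ to separate the two weight classes, while the $1/\delta$ factor in the query bound forbids $\delta$ from being much smaller than $1/\sqrt{n}$ without inflating the cost, so $\delta = \Theta(1/\sqrt{n})$ is exactly the scale that separates the classes and keeps the query count at $O(\sqrt{n})$.
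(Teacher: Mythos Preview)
Your proposal is correct and follows essentially the same approach as the paper: both invoke Lemma~\ref{AppCount} with $\delta = 1/\sqrt{n}$, verify that the resulting estimate ranges $n/2 \pm \sqrt{n}/2 - 1$ do not overlap, and read off the $O(\sqrt{n})$ query bound from the $\frac{1}{\delta}\sqrt{n/t}$ cost with $t = \Theta(n)$. Your version is slightly more explicit in specifying the threshold at $n/2$ and noting the mild assumption $n > 4$, but the argument is the same.
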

\begin{proof}
We use approximate counting (Lemma~\ref{AppCount}) in the following way. For $\gm_n$, we know that $t = n/2 \pm \sqrt{n}$, thus we need to choose a $\delta$ such that the minimum estimate of $n/2 + \sqrt{n}$ is greater than the maximum estimate of $n/2 - \sqrt{n}$. When $\delta = 1/\sqrt{n}$, the minimum estimate for $t = n/2 + \sqrt{n}$ and maximum estimate for $t = n/2 - \sqrt{n}$ is $n/2 \pm \sqrt{n}/2 - 1$ respectively and hence there is no overlap. 

Given $t = n/2 \pm \sqrt{n}$ and choosing $\delta = 1/\sqrt{n}$, the quantum algorithm from Lemma~\ref{AppCount} can estimate a non-overlapping $t'$ using $O(\sqrt{n})$ queries with probability at least $1-\epsilon$. Thus $Q_\epsilon(\gm_n)$ = $O(\sqrt{n})$. 
\end{proof}

The matching lower bound was given by~\cite{DBLP:journals/toc/AaronsonA14} using the positive adversary method. We give a complete proof for the sake of completeness.

\begin{lemma}
\label{lowerBoundGapMaj}
$Q_\epsilon(\gm_n)$ = $\Omega(\sqrt{n})$.
\end{lemma}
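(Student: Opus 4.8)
The plan is to apply the positive adversary method in the form of Theorem~\ref{posAdvTheorem}. Take $X$ to be the set of all inputs of Hamming weight $n/2 - \sqrt{n}$ (on which $\gm_n = 0$) and $Y$ to be the set of all inputs of Hamming weight $n/2 + \sqrt{n}$ (on which $\gm_n = 1$); then $\gm_n(x) \neq \gm_n(y)$ for every $x \in X$ and $y \in Y$, as required. I would relate $x$ and $y$ exactly when $x \le y$ coordinatewise, i.e., $R = \{(x,y) \in X \times Y : x_i \le y_i \text{ for all } i\}$. Since $\abs{y} - \abs{x} = 2\sqrt{n}$, such a $y$ is obtained from $x$ by flipping exactly $2\sqrt{n}$ of its $0$-coordinates to $1$, and symmetrically such an $x$ is obtained from $y$ by flipping $2\sqrt{n}$ of its $1$-coordinates to $0$.

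The four parameters of Theorem~\ref{posAdvTheorem} then become explicit binomial counts, all independent of the particular $x$ or $y$ because $\gm_n$ is symmetric. A fixed $x \in X$ has $n - \abs{x} = n/2 + \sqrt{n}$ zeros, and a neighbour $y \in Y$ is specified by choosing which $2\sqrt{n}$ of them to raise, so $m = \binom{n/2 + \sqrt{n}}{2\sqrt{n}}$; by the mirror count $m' = \binom{n/2 + \sqrt{n}}{2\sqrt{n}}$ as well. For $l$, fix $x$ and an index $i$ that can differ between $x$ and some related $y$; this forces $x_i = 0$ and $y_i = 1$, so $i$ is one of the raised coordinates and the remaining $2\sqrt{n} - 1$ raised coordinates are chosen among the other $n/2 + \sqrt{n} - 1$ zeros, giving $l = \binom{n/2 + \sqrt{n} - 1}{2\sqrt{n} - 1}$; symmetrically $l' = \binom{n/2 + \sqrt{n} - 1}{2\sqrt{n} - 1}$.

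It remains to evaluate $\sqrt{mm'/(ll')} = m/l$. Using the identity $\binom{N}{k}/\binom{N-1}{k-1} = N/k$ with $N = n/2 + \sqrt{n}$ and $k = 2\sqrt{n}$ gives
\[ \frac{m}{l} = \frac{n/2 + \sqrt{n}}{2\sqrt{n}} = \frac{\sqrt{n}}{4} + \frac{1}{2} = \Theta(\sqrt{n}). \]
Theorem~\ref{posAdvTheorem} then yields $Q_\epsilon(\gm_n) = \Omega\!\left(\sqrt{mm'/(ll')}\right) = \Omega(\sqrt{n})$, which is the desired bound.

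The argument is essentially a bookkeeping exercise, so the only real care needed is in setting up the relation so that the four counts are clean and in confirming that they are genuinely constant over all $x \in X$ and $y \in Y$ (which holds by symmetry). The one technical wrinkle is that $2\sqrt{n}$ and the various half-integer weights must be integers; I would handle this exactly as the upper bound does, by treating $\sqrt{n}$ as $\lfloor \sqrt{n} \rfloor$ (or assuming $n$ is a perfect square), which affects none of the asymptotics.
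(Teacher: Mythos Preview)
Your proposal is correct and follows essentially the same approach as the paper's own proof: both apply Theorem~\ref{posAdvTheorem} with the same sets $X,Y$ and the same relation $R$ (``$1$-bits of $x$ are contained in those of $y$'' is exactly $x\le y$ coordinatewise), obtaining the same binomial values for $m,m',l,l'$ (the paper writes $m'$ and $l'$ using the complementary lower index, but $\binom{n/2+\sqrt n}{n/2-\sqrt n}=\binom{n/2+\sqrt n}{2\sqrt n}$ and similarly for $l'$). Your explicit evaluation $m/l=(n/2+\sqrt n)/(2\sqrt n)=\Theta(\sqrt n)$ is a clean finish that the paper leaves implicit.
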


\begin{proof}

 There are multiple ways of obtaining this lower bound namely using lemma 29 of~\cite{DBLP:journals/toc/AaronsonA14} or by applying theorem 1.1 from \cite{DBLP:conf/stoc/NayakW99} on the function $\gm(f)$. However, we use the positive adversary bound from the original Ambainis article~\cite{ambainis_quantum_2000}.
Using the notation of Theorem~\ref{posAdvTheorem}, let $X$ be the set of all inputs such that $\gm_n(x) = 0$ and $Y$ be the set of all inputs such that $\gm_n(x) = 1$. We take $R$ to be the set of all pairs $(x,y)$ such that the bits which are set to $1$ in $x$ are a subset of the bits which are set to $1$ in $y$. 

For any $x \in X$, the number of $y \in Y$ such that $(x,y) \in R $ are $\binom{\frac{n}{2} + \sqrt{n}}{2\sqrt{n}}$. To enumerate y, we have to look at the number of ways in which we can fill $\frac{n}{2} + \sqrt{n}$ places with $2\sqrt{n}$ ones and $\frac{n}{2} - \sqrt{n}$ zeroes. This is because the rest of the $\frac{n}{2} - \sqrt{n}$ places are the ones which correspond to the ones in $x$. Since this is true for any $x \in X$, the value of $m$ is $\binom{\frac{n}{2} + \sqrt{n}}{2\sqrt{n}}$. Similar argument holds true for $m'$ as well, whose value turns out to be $\binom{\frac{n}{2} + \sqrt{n}}{\frac{n}{2} - \sqrt{n}}$. 

For a particular $i$, if $x_{i} = 0$, the number of $y \in Y$ such that $(x,y) \in R$ and $y_{i} = 1$ is  $\binom{\frac{n}{2} + \sqrt{n}-1}{2\sqrt{n}-1}$. This is because we are fixing the value of $y_{i}$ to be one, and we already have $\frac{n}{2} - \sqrt{n}$ ones filled out from the set bits of x,  so we are left with $2\sqrt{n}-1$ ones to be filled in $\frac{n}{2} + \sqrt{n}-1$ places. For a particular $i$, if $x_{i} = 1$, the number of $y \in Y$ such that $(x,y) \in R$ and $y_{i} = 0$ is $0$ because otherwise this pair $(x,y) \not\in R$.
So, for an $x \in X$, the maximum number of $y \in Y$ such that $(x,y) \in R$ and $x_{i} \neq y_{i}$ is  $\binom{\frac{n}{2} + \sqrt{n}-1}{2\sqrt{n}-1}$. Since this is true for any $x \in X$, the value of $l$ is also the same. 
Similar argument holds true for $l'$ as well, which equals $\binom{\frac{n}{2} + \sqrt{n}-1}{\frac{n}{2} - \sqrt{n}}$.

Substituting the values of $m, m', l, l'$, we get $Q_\epsilon(\gm_n) = \Omega(\sqrt{n})$.

\end{proof}

\begin{proof}[Proof of Theorem~\ref{gapMajQuery}]

From Lemma~\ref{upperBoundGapMaj} and Lemma~\ref{lowerBoundGapMaj}, we can conclude that $Q_\epsilon(\gm_n)$ = $\Theta(\sqrt{n})$.

\end{proof}

We examine the block sensitivity, randomized certificate complexity and $\MM(f)$ of $\gm_n$ in appendix~\ref{sec:gapMajappendix}.


\subsubsection{Acknowledgements} 
We would like to thank Sourav Chakraborty, Manaswi Paraashar and Swagato Sanyal for the discussions.


%
%
%
\bibliographystyle{splncs04}
\bibliography{Refer}

\newpage
\appendix
\section{Preliminaries}
\label{ch:chap2}

\emph{Norm:} For a vector $v$, we use $\mVert*{v}$ to denote its 2-norm.  The spectral norm of a square matrix $\Gamma$ is defined as
\[ \mVert*{\Gamma}=\max_{v:\mVert*{v}=1}\mVert*{\Gamma.v}= \max_{\mVert*{u}=\mVert*{v} =1} u^T \Gamma v . \]

\begin{lemma}\label{Sum of matrices}
    For any two non-negative $n\times m$ matrices $A$ and $B$, $\mVert*{A+B}\ge\max\{\mVert*{A},\mVert*{B}\}$
\end{lemma}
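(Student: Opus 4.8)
The plan is to exploit the entrywise non-negativity of $A$ and $B$ to reduce the variational definition of the spectral norm to \emph{non-negative} test vectors, after which the inequality becomes a one-line positivity argument. By symmetry it suffices to prove $\mVert*{A+B} \ge \mVert*{A}$; the bound $\mVert*{A+B}\ge\mVert*{B}$ then follows by swapping the roles of $A$ and $B$, and taking the maximum gives the claim.

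First I would establish that the maximum defining $\mVert*{A}$ can be attained at a pair of entrywise non-negative unit vectors. Let $u^*,v^*$ be unit vectors with $\mVert*{A} = (u^*)^T A\, v^*$, and note this value is non-negative (it equals the spectral norm). Writing $|u^*|,|v^*|$ for the entrywise absolute values, the triangle inequality together with $A_{ij}\ge 0$ gives
\[
\mVert*{A} = (u^*)^T A\, v^* = \sum_{i,j} u^*_i A_{ij} v^*_j \;\le\; \sum_{i,j} |u^*_i|\, A_{ij}\, |v^*_j| = |u^*|^T A\, |v^*|.
\]
Since $\mVert*{|u^*|} = \mVert*{u^*} = 1$ and likewise for $v^*$, the reverse inequality $|u^*|^T A\,|v^*| \le \mVert*{A}$ holds by definition, so $u := |u^*|$ and $v := |v^*|$ are non-negative unit vectors with $u^T A\, v = \mVert*{A}$.

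With these $u,v$ in hand, the main step is immediate: by the definition of the spectral norm applied to the sum,
\[
\mVert*{A+B} \;\ge\; u^T (A+B)\, v = u^T A\, v + u^T B\, v = \mVert*{A} + u^T B\, v \;\ge\; \mVert*{A},
\]
where the final inequality uses that $B$ has non-negative entries and $u,v\ge 0$, forcing $u^T B\, v \ge 0$. I expect the only subtlety—rather than a genuine obstacle—to lie in the first reduction: one must check that the optimum is non-negative so that passing to absolute values cannot decrease the objective, and that it is precisely the entrywise non-negativity of $A$ (not semidefiniteness) that makes the triangle inequality point in the favorable direction. Once the maximizing vectors are taken non-negative, the remaining positivity estimate carries no difficulty.
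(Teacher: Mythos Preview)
Your proof is correct and follows essentially the same route as the paper: pick optimal unit vectors for $\mVert*{A}$, replace them by their entrywise absolute values (using that $A$ has non-negative entries), and then observe that the same vectors give $u^T(A+B)v \ge u^T A v \ge \mVert*{A}$ since $B\ge 0$ entrywise. The only cosmetic difference is that you explicitly verify $|u^*|^T A\,|v^*| = \mVert*{A}$ (equality), whereas the paper just uses the inequality $\tilde u^T A\,\tilde v \ge u^T A v$, which is all that is needed.
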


\begin{proof}

Let us assume that $\mVert*{A}\ge\mVert*{B}$, without loss of generality. Let $u,v$ be unit vectors such that $u^TA.v= \mVert*{A}$. Let $\tilde{v},\tilde{u}$ be unit vectors such that $\tilde{v}_i=\pVert*{v_i},\tilde{u}_i=\pVert*{u_i}$. Since $A$ and $B$ are both non-negative, ${(A+B)}_{i,j}\ge A_{i,j} \ge 0, \forall i,j \in [n]$. So, 

\begin{equation}
    \tilde{u}^T.(A+B).\tilde{v} = \sum_i\sum_j\tilde{u}_i.(A+B)_{ij}.\tilde{v}_j \ge \sum_i\sum_j\tilde{u}_i.A_{ij}.\tilde{v}_j \ge \sum_i\sum_jv_i.A_{ij}.v_j = \mVert*{A}.
\end{equation}

This implies,

\begin{equation}
    \mVert*{(A+B)} = \max_{\mVert*{u}=\mVert*{v}=1} u^T.(A+B).v \ge \max_{\mVert*{u}=\mVert*{v}=1} u^T.A.v = \mVert*{A}.
\end{equation}

\end{proof}

\emph{Boolean functions:} A function $f:\mc{D} \rightarrow \{0,1\}$, where $\mc{D} \subseteq \{0,1\}^n$, is called a Boolean function. It is \emph{total} if $\mc{D} = \{0,1\}^n$. Otherwise, if $\mc{D}$ is a strict subset of $\{0,1\}^n$, then it is called a \emph{partial} function. 

The Adjacency matrix of a function $f$ : ${\{0,1\}}^n\to{\{0,1\}}$ is defined as an $n\times n$ matrix $A_f \in \{0,1\}^{n\times n}$, where $A_f[x,y]=1$ if and only if $f(x)\ne f(y)$ and the Hamming distance between $x$ and $y$ is 1.

The Hamming weight $\abs{x}$ of an input $x \in \{0,1\}^n$ is defined as the number of bits which are set to 1. 
A Boolean function $f:\{0,1\}^n \rightarrow \{0,1\}$ is called symmetric if the value of the function only depends on the Hamming weight of the input. Equivalently, for any permutation $\sigma \in S_n$,
\[ f(x_1, x_2, \cdots, x_n) = f(x_{\sigma(1)}, x_{\sigma(2)}, \cdots, x_{\sigma(n)}) .\]
For any symmetric function $f$, we define $t_f$ to be the minimum value such that the function $f$ is constant for Hamming weights between $t_f$ and $n-t_f$. Notice that $t_f \leq n/2$.

Many natural and well studied functions like OR, AND, MAJORITY, PARITY are symmetric. One of the important partial symmetric functions is the Gap Majority.
\begin{definition} \label{defGM}
The Gap Majority function on $n$ variables, called $\gm_n$, is the partial symmetric function 
\begin{equation*}
    \gm_n(x)=\begin{cases}
    0, & \textit{if $\abs{x} = n/2 - \sqrt{n}$ }\\
    1, & \textit{if $\abs{x} = n/2 + \sqrt{n}$}\\
    \textit{not defined}, & \textit{otherwise}.
  \end{cases}
\end{equation*}
\end{definition}

The bounded error quantum query complexity of a Boolean function $f$, called $Q_{\epsilon}(f)$, is the minimum number of queries needed to compute $f$ with error $\epsilon$. 
By repeating the algorithm constant number of times, the success probability can be made $1- \epsilon$ for any constant $0< \epsilon < 1/2$. 
We introduce a few lower bounds on quantum query complexity in the following subsections. For the exact definition and more details about quantum query complexity, 
please look at the survey by Hoyer and Spalek~\cite{DBLP:journals/eatcs/HoyerS05}.

\subsection{Positive adversary}
\label{PositiveAdv}
Ambainis~\cite{ambainis_quantum_2000} introduced the first positive adversary bound, denoted by $\Adv^+(f)$. Later, many modification of it were used to give lower bounds on different problems~\cite{DBLP:conf/coco/BarnumSS03,DBLP:conf/focs/Ambainis03,DBLP:journals/tcs/Zhang05,DBLP:conf/coco/LaplanteM04};
all of those were shown to be equivalent~\cite{spalek_all_2006}. (These methods do not include the generalized (negative) adversary method~\cite{hoyer_negative_2007}.)
We give definitions of a few versions of positive adversary method that will be used in this article.

The following version is from the original article by Ambainis~\cite[Theorem2]{ambainis_quantum_2000}.
\begin{theorem}
\label{posAdvTheorem}
Let $f(x_{1},...,x_{n})$ be a function of n $\{0,1\}$-valued variables and X,Y be two sets of inputs such
that $f(x)\neq f(y)$ if $x \in X$ and $y \in Y$. Let $R \subset X\times Y$ be such that
\begin{enumerate}
    \item For every $x \in X$, there exist at least $m$ different $y \in Y$ such that $(x,y) \in R$.
    \item  For every $y\in Y$ , there exist at least $m'$ different $x\in X$ such that $(x,y) \in R$.
    \item  For every $x \in X$ and $i \in \{1,...,n\}$, there are at most $l$ different $y \in Y$ such that $(x,y) \in R$ and $x_{i} \neq y_{i}$.
    \item For every $y \in Y$ and $i \in \{1,...,n\}$, there are at most $l'$ different $x \in X$ such that $(x,y) \in R$ and $x_{i} \neq y_{i}$.
    
\end{enumerate}
Then, any quantum algorithm computing f uses $\Omega(\sqrt{\frac{mm'}{ll'}})$ queries.
\end{theorem}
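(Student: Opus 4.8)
The plan is to establish this as a quantum adversary (progress measure) argument in the style of Ambainis. Fix any quantum algorithm that computes $f$ using $T$ queries, and for an input $z$ let $|\psi_z^t\rangle$ denote the algorithm's workspace state after $t$ queries; the overlap $\langle \psi_x^t | \psi_y^t \rangle$ then compares the runs on two different inputs. I would first repackage the combinatorial data of $R$ into a single symmetric \emph{adversary matrix} $\Gamma$ indexed by $X \cup Y$, setting $\Gamma_{x,y} = \Gamma_{y,x} = 1$ whenever $(x,y) \in R$ and $\Gamma = 0$ on every other pair; in particular $\Gamma_{x,y}=0$ whenever $f(x)=f(y)$, as the method requires. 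Taking $\delta$ to be a unit eigenvector for the largest (positive) eigenvalue of $\Gamma$, the quantity to track is the progress functional
\[ W_t = \sum_{x,y} \Gamma_{x,y}\, \delta_x \delta_y \, \langle \psi_x^t | \psi_y^t \rangle . \]

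The argument then rests on three estimates. First, at $t=0$ the state is independent of the input, so $\langle \psi_x^0 | \psi_y^0 \rangle = 1$ and $W_0 = \delta^T \Gamma \delta = \mVert*{\Gamma}$. Second, since the algorithm outputs $f$ with error at most $\epsilon$ and $f(x)\ne f(y)$ for every $(x,y)\in R$, the final states must be nearly orthogonal, $|\langle \psi_x^T | \psi_y^T \rangle| \le 2\sqrt{\epsilon(1-\epsilon)}$, whence $|W_T| \le 2\sqrt{\epsilon(1-\epsilon)}\,\mVert*{\Gamma}$. Third, a single query can move the functional by only a bounded amount, $|W_{t+1}-W_t| \le 2\max_i \mVert*{\Gamma \circ D_i}$, where $D_i$ is the $0/1$ matrix with $(D_i)_{x,y}=1$ iff $x_i \ne y_i$ and $\circ$ denotes the entrywise product. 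Telescoping over the $T$ steps and combining the first two estimates yields $T \ge \tfrac{1-2\sqrt{\epsilon(1-\epsilon)}}{2}\cdot \tfrac{\mVert*{\Gamma}}{\max_i \mVert*{\Gamma \circ D_i}}$.

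It then remains to read off the two spectral quantities from the hypotheses on $R$. For the numerator, the test vector that is an optimally balanced constant on $X$ and on $Y$ gives a Rayleigh quotient equal to $|R|/\sqrt{|X||Y|}$; since every $x\in X$ has at least $m$ partners and every $y\in Y$ at least $m'$ (conditions 1 and 2), we get $|R|\ge m|X|$ and $|R|\ge m'|Y|$, hence $|R|^2 \ge mm'|X||Y|$ and $\mVert*{\Gamma} \ge \sqrt{mm'}$. For the denominator, $\Gamma \circ D_i$ is the biadjacency matrix of the subrelation of $R$ consisting only of pairs that differ in coordinate $i$; by conditions 3 and 4 each of its row sums is at most $l$ and each column sum at most $l'$, so the standard fact that the spectral norm of a nonnegative matrix is at most the geometric mean of its maximum row sum and maximum column sum gives $\mVert*{\Gamma \circ D_i} \le \sqrt{ll'}$. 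Substituting both bounds produces $T = \Omega\!\big(\sqrt{mm'/ll'}\big)$.

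The main obstacle is the per-query bound $|W_{t+1}-W_t| \le 2\max_i \mVert*{\Gamma \circ D_i}$, since this is the sole point where the quantum model is genuinely used. The key observation is that a query applies a unitary that acts identically on the components of $|\psi_x^t\rangle$ and $|\psi_y^t\rangle$ supported on positions $i$ with $x_i = y_i$, so only coordinates with $x_i \ne y_i$ — precisely those recorded by $D_i$ — can change the overlap; a Cauchy--Schwarz estimate on these coordinates, summed against the weights encoded in $\Gamma \circ D_i$, yields the claimed bound. Everything else — the two norm estimates and the telescoping over $t$ — is routine once this lemma is in place.
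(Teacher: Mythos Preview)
The paper does not give its own proof of this theorem: it is quoted in the preliminaries as \cite[Theorem~2]{ambainis_quantum_2000} and used as a black box (e.g.\ in Lemma~\ref{lowerBoundGapMaj}). So there is no in-paper argument to compare against.

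That said, your argument is correct and is the standard ``spectral adversary'' presentation. You pack the relation $R$ into the nonnegative symmetric matrix $\Gamma$, track the weighted progress functional $W_t$, bound $W_0=\mVert*{\Gamma}$ and $|W_T|\le 2\sqrt{\epsilon(1-\epsilon)}\,\mVert*{\Gamma}$, invoke the per-query lemma $|W_{t+1}-W_t|\le 2\max_i\mVert*{\Gamma\circ D_i}$, and finally read off $\mVert*{\Gamma}\ge\sqrt{mm'}$ via the balanced test vector and $\mVert*{\Gamma\circ D_i}\le\sqrt{ll'}$ via the row/column-sum bound for nonnegative matrices. This is precisely the $\SA(f)$ formulation the paper cites from~\cite{DBLP:conf/coco/BarnumSS03}, specialized to $\Gamma$ equal to the $0/1$ indicator of $R$; by the equivalence results the paper also cites~\cite{spalek_all_2006}, it is interchangeable with Ambainis's original combinatorial argument. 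If anything, your route is slightly cleaner than the original proof in~\cite{ambainis_quantum_2000}, which tracked a closely related sum of overlaps but phrased the per-step decrease combinatorially rather than via $\mVert*{\Gamma\circ D_i}$. Your identification of the per-query bound as the only genuinely quantum step, and the Cauchy--Schwarz mechanism behind it, is also accurate.
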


The next version is called spectral adversary~\cite{DBLP:conf/coco/BarnumSS03}.
\begin{definition}
Let $f:{\{0,1\}}^n\to\{0,1\}$ be a Boolean function. Let $D_i $, for all $i\in [n]$, be  $2^n\times2^n$ Boolean matrices, where indexes for rows and columns are from inputs ${\{0,1\}}^n$. The $x,y$ entry of matrix $D_i$ is $1$ iff $x_i\ne y_i$. Similarly, $F$ is a $2^n \times 2^n$ Boolean matrix such that $F[x,y]=1 \Leftrightarrow f(x)\ne f(y)$. Let $\Gamma$ be a   $2^n\times2^n$ non-negative symmetric matrix, then  

\begin{equation}
    \SA(f) = \max_{\Gamma: \Gamma\circ F=\Gamma} \frac{\mVert*{\Gamma}} {\max_{i\in[n]}\mVert*{\Gamma\circ D_i}}.
\end{equation}

\end{definition}

Another version is called the minimax adversary method $\MM(f)$~\cite{DBLP:conf/coco/LaplanteM04}, and is a minimization problem. 
\begin{definition}
\label{MMfDefinition}
Let $f:{S \to\{0,1\}}$ where $S \subseteq \{0,1\}^n $ be a Boolean function.
Let $w$ be a weight function, then 

\begin{equation}
\begin{split}
    \MM(f) = 
    \min_{w} \max_{x \in \text{Dom(f)}} \Sigma_{i \in [n]} w(x,i) \\
    \text{s.t} \sum_{i : x_{i} \neq y_{i}} \sqrt{w(x,i)w(y,i)} \geq 1, & \text{ \; $\forall x, y $ : $f(x) \neq f(y)$} \\
         w(x,i) \geq 0, & \text{ \; $\forall x \in Dom(f), i \in [n]$}.
\end{split}
\end{equation}
\end{definition}

We know that $\Adv(f) = \MM(f)= \SA(f) = O(Q_{\epsilon}(f))$~\cite{spalek_all_2006,ambainis_quantum_2000,DBLP:conf/coco/BarnumSS03,DBLP:conf/coco/LaplanteM04}. 

\subsection{Spectral Sensitivity \texorpdfstring{$\lambda(f)$}{}}

In 2020, Aaronson et al. introduced a new measure based on the sensitivity graph of a Boolean function, which can be used to estimate complexity of the function~\cite{DBLP:conf/stoc/AarBKRT21}.  

\begin{definition}
For a total Boolean function $f:{\{0,1\}}^n\rightarrow\{0,1\}$, the spectral sensitivity is defined as the spectral norm of its adjacency matrix $A_f$.

\begin{equation}
    \lambda(f) = \mVert*{A_f} = \max_{v:\mVert*{v}=1} \mVert*{A_f.v}
\end{equation}
\end{definition}

This spectral relaxation of sensitivity was found to be a lower bound for spectral adversary method by Aaronson et al.~\cite{DBLP:conf/stoc/AarBKRT21}. It was also observed that since $G_f$ is symmetric and bipartite, the spectral norm of $A_f$ is simply the largest eigenvalue of $A_f$.

\subsection{Approximate degree}

A multivariate polynomial $p:\mb{R}^n \rightarrow \mb{R}$ is said to approximate a Boolean function $f$ with error $\epsilon$ if
\begin{equation}
    |p(x)-f(x)|\le\epsilon, \forall x\in{\{0,1\}}^n  .
\end{equation}

\begin{definition}
The $\epsilon$-approximate degree of $f:\{0,1\}^n \rightarrow \mb{R}$, $\deg_\epsilon(f)$, is the minimum degree of a polynomial which approximates $f$, i.e.,

\begin{equation}
    \deg_\epsilon(f) = \min_{p:|p(x)-f(x)|\le\epsilon,\forall x\in\{0,1\}^n} \deg(p).
\end{equation}
\end{definition}

We know $Q_{\epsilon}(f) \geq \frac{\deg_\epsilon (f)}{2}$~\cite{beals_quantum_1998}.

\subsection{Sensitivity Measures and Certificate Complexity}
\label{sensitivity}
\begin{table}
    \centering
    \begin{tabular}{|c|c|c|c|}
         \hline
         \textbf{Measure} & \textbf{Sensitivity} & \textbf{Block Sensitivity} & \textbf{Certificate Complexity} \\ \hline
         Local at input $x$ & $\s(f,x)$ & $\bs(f,x)$ & $\C(f,x)$ \\ \hline
         For output $z$ & $\s_z(f)=\max_{x:f(x)=z}\s(f,x)$ & $\bs_z(f)=\max_{x:f(x)=z}\bs(f,x)$ & $\C_z(f)=\max_{x:f(x)=z}\C(f,x)$ \\ \hline
         For function $f$ & $\s(f)=\max_{x}\s(f,x)$ & $\bs(f)=\max_{x}\bs(f,x)$ & $\C(f)=\max_{x}\C(f,x)$ \\ \hline
    \end{tabular}
    \caption{Details of Sensitivity Measures}
    \label{tab:sensitivityDetails}
\end{table}
For a Boolean function $f: {\{0,1\}}^n\to{\{0,1\}}$, every input $x$ is a string of $n$ bits. For an index $i\in[n]$ (or a block of indices $B\subseteq[n]$), define $x^i$ ($x^B$) to be the input where the $i$-th bit (all bits in block $B$) is flipped. An index $i$ (or a block $B$) is called \textit{sensitive} for input $x$ if $f(x) \neq f(x^i)$ ($f(x)\ne f(x^B)$). 

The local sensitivity $\s(f,x)$ (local block sensitivity $\bs(f,x)$) at an input $x$ is the number of sensitive indices (the maximum number of disjoint sensitive blocks) possible in the input $x$.

For any input $x$, a \textit{certificate} is a set of indices $C\subseteq[n]$ such that for any input $y\in{\{0,1\}}^n$, if $x_i=y_i ~~ \forall i\in C$, then $f(x)=f(y)$. The smallest size possible for a certificate at input $x$ is its local certificate complexity $\C(f,x)$. 

The local sensitivity, block sensitivity and certificate complexity can be used to define these measures on a particular output of a function and for a function itself in general.
The precise definitions are given in \ref{tab:sensitivityDetails}.

We know that $\s(f) \leq \bs(f) \leq \C(f)$~\cite{buhrman_complexity_2002}. For any symmetric function $f$, it is easy to see that $\s(f) = \C(f) = \Theta(n)$ where $n$ is the arity of $f$. This implies that all intermediate measures like block sensitivity and fractional certificate complexity are also $\Theta(n)$. 

\subsection{Expectational certificate complexity}

A new complexity measure was introduced by Gavinsky et al~\cite{DBLP:journals/corr/abs-1708-00822} called expectational certificate complexity, defined as follows.

\begin{definition}
\label{ECDefinition}
Let $f:{S \to\{0,1\}}$ where $S \subseteq \{0,1\}^n $ be a Boolean function.
Let $w$ be a weight function, then

\begin{equation}
\begin{split}
    \EC(f)=
    \min_{w} \max_{x \in \text{Dom(f)}} \Sigma_{i \in [n]} w(x,i) \\
    \text{s.t} \sum_{i : x_{i} \neq y_{i}} w(x,i)w(y,i) \geq 1, & \text{ \; $\forall x, y $ : $f(x) \neq f(y)$} \\
    0\leq w(x,i) \leq 1, & \text{ \; $\forall x \in Dom(f), i \in [n]$}.
\end{split}
\end{equation}

\end{definition}

It was shown that $\EC(f) \geq \FC(f)$ (\cite{DBLP:journals/corr/abs-1708-00822}[Lemma 7]) and $\EC(f) = \Omega(n)$ for all symmetric functions $f$.

\section{Appendix}
\label{appendix}

\subsection{Spectral sensitivity of symmetric functions}
\label{spec_sens_symm}

Here, we show that the spectral sensitivity of a \emph{total} symmetric function $f:\{0,1\}^n \rightarrow \{0,1\}$ is  $\Theta(\sqrt{t_f \cdot  n})$ (Theorem~\ref{symmetricQueryBounds}). Remember that for any symmetric Boolean function $f$ on $n$ variables, we define $t_f$ to be the minimum value such that the function $f$ is constant for Hamming weights between $t_f$ and $n-t_f$.

The  spectral sensitivity of a function is given by the spectral norm of the sensitivity graph of the function. First, we will find the spectral sensitivity of threshold functions. Then, we will express the sensitivity graph of a general symmetric functions in terms of the sensitivity graph of threshold functions and obtain tight lower bound on spectral sensitivity.

\subsubsection{Threshold Functions}

A \emph{threshold function} with threshold $k$, $T_k:{\{0,1\}}^n \to \{0,1\}$, is a symmetric Boolean function defined as  
\[
 T_k(x)= \begin{cases}
1 \ \ \  \textit{if }\pVert*{x}\ge k, \\
0 \ \ \  \textit{otherwise}.
\end{cases}
\]

\begin{theorem} \label{thSpSBounds}
    For the threshold function $T_k:{\{0,1\}}^n\to\{0,1\}$ with threshold $k$, 
    \[ \lambda(T_k)=\sqrt{k \cdot (n+1-k)}. \]
\end{theorem}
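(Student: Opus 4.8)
The plan is to exploit the very restricted structure of the sensitivity graph of a threshold function. Since $T_k$ takes the value $0$ on inputs of Hamming weight at most $k-1$ and $1$ on inputs of weight at least $k$, two inputs $x,y$ at Hamming distance $1$ can satisfy $T_k(x)\neq T_k(y)$ only when one of them has weight exactly $k-1$ and the other weight exactly $k$. Hence $A_{T_k}$ is the adjacency matrix of a \emph{bipartite} graph whose two sides are the level-$(k-1)$ inputs $L=\{x:\abs{x}=k-1\}$ and the level-$k$ inputs $R=\{y:\abs{y}=k\}$, with an edge between $x\in L$ and $y\in R$ exactly when $y$ is obtained from $x$ by flipping a single $0$ to a $1$. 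Writing $B$ for the $\abs{L}\times\abs{R}$ biadjacency matrix, the full matrix has the block form
\[
A_{T_k}=\begin{pmatrix}0 & B\\ B^{T}&0\end{pmatrix},
\]
so its nonzero eigenvalues are exactly $\pm$ the singular values of $B$; in particular $\lambda(T_k)=\mVert*{A_{T_k}}=\mVert*{B}$, the largest singular value of $B$. This reduces the problem to computing $\mVert*{B}$.

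The key observation is that this bipartite graph is \emph{biregular}. Each $x\in L$ has $\abs{x}=k-1$, so it has $n-(k-1)=n-k+1$ zero coordinates, each of which can be flipped to reach a distinct neighbour in $R$; thus every left vertex has degree $d_L=n-k+1$. Symmetrically, each $y\in R$ has $k$ one-coordinates, each of which can be flipped to reach a distinct neighbour in $L$, so every right vertex has degree $d_R=k$. I would then establish $\mVert*{B}=\sqrt{d_L d_R}=\sqrt{(n-k+1)\,k}=\sqrt{k(n+1-k)}$ via matching upper and lower bounds.

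For the upper bound I would use the standard interpolation inequality $\mVert*{B}\le\sqrt{\mVert*{B}_1\,\mVert*{B}_\infty}$, where $\mVert*{B}_1$ is the maximum column sum and $\mVert*{B}_\infty$ the maximum row sum; biregularity gives $\mVert*{B}_1=d_R$ and $\mVert*{B}_\infty=d_L$, yielding $\mVert*{B}\le\sqrt{d_L d_R}$. For the matching lower bound I would test the Rayleigh quotient of $B^{T}B$ against the all-ones vector $\mathbf{1}_R$: since $B\mathbf{1}_R=d_L\mathbf{1}_L$ and $B^{T}\mathbf{1}_L=d_R\mathbf{1}_R$, the vector $\mathbf{1}_R$ is an eigenvector of $B^{T}B$ with eigenvalue $d_L d_R$, whence $\mVert*{B}^2=\mVert*{B^{T}B}\ge d_L d_R$. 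Combining the two directions gives the claimed value.

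Neither direction is technically hard; the only real content is recognising the two-level bipartite biregular structure and reading off the correct degrees. As a sanity check, the edge cases fall out automatically and match the formula: $k=1$ gives the star $K_{1,n}$ (the all-zeros input joined to the $n$ weight-one inputs) with $\lambda=\sqrt{n}$, and the degenerate constant cases give $\lambda=0$, both consistent with $\sqrt{k(n+1-k)}$.
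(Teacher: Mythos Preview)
Your proof is correct and shares the same skeleton as the paper's: both recognise that the sensitivity graph of $T_k$ is the bipartite graph between Hamming levels $k-1$ and $k$, and both obtain the lower bound by applying the matrix to the uniform vector on one level (the paper's test vector $v_k$ is exactly your $\mathbf{1}_R$ embedded in the full space). The one genuine difference is the upper bound: the paper invokes the general inequality $\lambda(f)\le\sqrt{\s_0(f)\,\s_1(f)}$ from Aaronson et al.\ and reads off $\s_0(T_k)=n+1-k$, $\s_1(T_k)=k$, whereas you work directly with the biadjacency matrix and use the norm interpolation $\mVert*{B}\le\sqrt{\mVert*{B}_1\mVert*{B}_\infty}$. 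These are really the same inequality specialised to this situation, but your route is fully self-contained and avoids the external citation, while the paper's route ties the computation into a broader complexity measure. One cosmetic point: your block form for $A_{T_k}$ tacitly drops the isolated vertices (all inputs of weight $\notin\{k-1,k\}$); this is harmless for the spectral norm but might be worth a word in a formal write-up.
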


\begin{proof}

The adjacency matrix of the sensitivity graph of $T_k$ is denoted by $A_{T_k}$. Remember that
\[ \lambda(T_k) = \mVert*{A_{T_k}} = \max_{v:\mVert*{v}=1}\mVert*{A_{T_k} \cdot v} .\]

For any $l$, define $v_l$ with indices in ${\{0,1\}}^n$ to be, 
\[ v_l(x)= \begin{cases}
1 \ \ \ \textit{if } |x|=l  \\
0 \ \ \ \textit{otherwise.}
\end{cases}
\]

The length of $v_l$ is 
\[ \mVert*{v_l} = \sqrt{\sum_{x:\pVert*{x}=l}1^2 + \sum_{x:\pVert*{x}\ne l}0^2} = \sqrt{\binom{n}{l}}. \] 

To prove the lower bound, we will show that $v_k$ achieves a stretch of $\sqrt{k(n+k-1)}$. 
Expanding the value at any index $x$: 
\[ (A_{T_k} \cdot v_k)[x] = \sum_{0\le |y| \le n}A_{T_k}[x,y] \cdot v_k[y]. \]

Since $v_k[y]=1\Leftrightarrow\pVert*{j}=k$, 
\[ (A_{T_k} \cdot v_k)[x] = \sum_{|y|=k} A_{T_k}[x,y] \cdot v_k[y]. \]

Notice that $A_{T_k}[x,y]=1$ and $\pVert*{y}=k$ then $\pVert*{x}=(k-1)$. This implies that $(A_{T_k} \cdot v_k)[x] = 0$ if $\pVert*{x} \neq k-1$.
Also, for any $x: \pVert*{x}=(k-1)$, there are $(n+1-k)$ possible $y$'s such that Hamming distance between $x$ and $y$ is 1 and $\pVert*{y}=k$. So, 

\begin{equation}
    (A_{T_k} \cdot v_k) [x] = (n+1-k)   \ \ \  \textit{if } \pVert*{x} = k-1.
\end{equation}

In other words, $A_{T_k} \cdot v_k=(n+1-k) \cdot v_{k-1}$. Hence, the stretch in the length of vector $v_k$ when multiplied with adjacency matrix $A_{T_k}$ is 

\begin{equation}
    \frac{|A_{T_k} \cdot v_k|}{|v_k|} = \frac{(n+1-k)\cdot\pVert*{v_{k-1}}}{\pVert*{v_k}} = (n+1-k)\cdot\sqrt{\frac{\binom{n}{k-1}}{\binom{n}{k}}} = \sqrt{k\cdot(n+1-k)}.
\end{equation}

To prove the upper bound, we will use the result by Aaronson et al.~\cite{DBLP:conf/stoc/AarBKRT21}, 
\[ \lambda(f) \leq \sqrt{\s_0(f)\cdot\s_1(f)}. \]
For $T_k$, the sensitivity of an input $x$ is $k$ if $|x|=k$ and $n+1-k$ if $|x|=k-1$ (it is $0$ everywhere else). We get the required upper bound by noticing that $\s_0(T_k)=n+1-k$ and $\s_1(T_k)=k$. Since the same lower bound has already been proved,  
\[ \lambda(T_k) = \sqrt{k\cdot(n+1-k)}. \]

\end{proof}

On plotting the spectral sensitivity against the threshold $k$, we see that the spectral sensitivity is minimum when $k$ is $1$(OR) or $n$(AND), and maximum when $k=\frac{n}{2}$ (MAJORITY).

\subsubsection{\emph{Total} symmetric functions}

We observe that the sensitivity graph of any symmetric function $f$ can be written as sum of the sensitivity graphs of a subset of threshold functions. Define $S_f = \{1 \le k \le n: f(k)\ne f(k-1) \}$.

\begin{lemma} \label{expressTh}
    For a symmetric function $f:\{0,1\}^n \rightarrow \{0,1\}$, the adjacency matrix of the sensitivity graph of $f$ can be written as
    \[ A_f = \sum_{S_f}  A_{T_k} .\]
\end{lemma}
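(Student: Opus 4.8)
The plan is to prove the identity entrywise. Since both $A_f$ and every $A_{T_k}$ have a nonzero entry at $(x,y)$ only when the Hamming distance between $x$ and $y$ is exactly $1$, it suffices to fix such an edge $(x,y)$ and check that the two sides agree there; on all other pairs both sides are $0$ automatically. The first step is to record the basic structural fact: if $x$ and $y$ differ in exactly one coordinate then their Hamming weights differ by exactly $1$, so we may write $\{|x|,|y|\}=\{k-1,k\}$ for a unique $k\in\{1,\dots,n\}$. Call this $k$ the \emph{weight boundary} of the edge.

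Next I would evaluate each side on this fixed edge. Because $f$ is symmetric, its value depends only on the Hamming weight, so $A_f[x,y]=1$ iff $f(k)\neq f(k-1)$, which is precisely the condition $k\in S_f$. For the right-hand side I would compute $A_{T_{k'}}[x,y]$ for an arbitrary threshold $k'$: the function $T_{k'}$ changes value only across the boundary between weight $k'-1$ and weight $k'$, so $T_{k'}(x)\neq T_{k'}(y)$ holds exactly when $k'=k$. Concretely, if $k'\le k-1$ both endpoints have weight $\ge k'$ and $T_{k'}$ is $1$ on both; if $k'>k$ both have weight $<k'$ and $T_{k'}$ is $0$ on both; only $k'=k$ separates them. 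Hence $A_{T_{k'}}[x,y]=1$ iff $k'=k$, and therefore $\sum_{k'\in S_f}A_{T_{k'}}[x,y]=1$ iff $k\in S_f$ and $0$ otherwise, matching $A_f[x,y]$.

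This is essentially a partition/bookkeeping argument, so I do not anticipate a genuine obstacle. The single point that warrants care is the disjointness of supports observed above: each edge lies at exactly one weight boundary $k$, so it is picked up by exactly one threshold matrix $A_{T_k}$ and by none of the others. This disjointness is what guarantees that the sum of the $0/1$ matrices $A_{T_{k'}}$ remains a $0/1$ matrix with no edge double-counted, which is precisely why the sum reproduces $A_f$ rather than overcounting. Once this is stated cleanly the entrywise comparison closes the proof.
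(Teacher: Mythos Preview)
Your proof is correct and follows essentially the same approach as the paper's own proof: both argue entrywise, observe that the supports of the $A_{T_k}$ are pairwise disjoint (each Hamming-distance-$1$ edge lies at exactly one weight boundary), and then match the condition $A_f[x,y]=1$ with $k\in S_f$. Your write-up is in fact a bit more explicit about why $A_{T_{k'}}[x,y]=0$ for $k'\neq k$, which the paper asserts rather than spells out.
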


\begin{proof}

Let $B=\sum_{S_f} A_{T_k}$. Since the support of $A_{T_k}$ where $k\in S_f$ is disjoint, $B$ is also a $\{0,1\}$ matrix. We need to prove that $B = A_f$.

From the definition of sensitivity graph, $A_f[x,y]=1$ if and only if the Hamming distance between $x$ and $y$ is 1 and $f(x)\ne f(y)$. Without loss of generality ($A_f$ and $B$ are symmetric), assume $\pVert*{x} > \pVert*{y}$, then $\pVert*{x} \in S_f$ implying $B[x,y] = 1$.
 
For the reverse direction, if $B[x,y] = 1$ and $\pVert*{x} > \pVert*{y}$, then $\pVert*{x} \in S_f$. This means $f(x) \neq f(y)$, and the Hamming distance between $x$ and $y$ has to be $1$ from the definition of $A_{T_k}$. Again, $A_f[x,y] = 1$.





\end{proof}

Consider a symmetric function $f:{\{0,1\}}^n\to\{0,1\}$. As defined earlier, $t_f$ is the smallest value such that function value is a constant for the range of Hamming weights $\{t_f,..,n-t_f\}$. We capture the spectral sensitivity of $f$ using $t_f$.


\begin{proof} [Proof of Theorem~\ref{symmetricQueryBounds}]

From Lemma~\ref{expressTh}, the adjacency matrix of the sensitivity graph of $f$ can be written as $A_f = \sum_{S_f} A_{T_k}$. Since each $A_{T_k}$ has only non negative values, Lemma~\ref{Sum of matrices} gives us
\[ \lambda(f) = \mVert*{A_f} \ge \mVert*{A_{T_k}} = \lambda(T_k) ~~~\forall k \in S_f . \]. 

There is a change in function value of $f$ at Hamming weight $t_f$ or $(n+1-t_f)$ and from Theorem~\ref{thSpSBounds}
\[ \lambda(T_{t_f})= \lambda(T_{n+1-t_f}) = \sqrt{t_f.(n+1-t_f)}. \]

We get $\lambda(f) \ge \sqrt{t_f.(n+1-t_f)}$. Since $(n+1-t_f)=\Theta(n)$,
\[ \lambda(f) = \Omega(\sqrt{t_f \cdot n}) . \]

The upper bound follows from $\Adv^+(f)$, Theorem~\ref{ExplicitSoln}.
\end{proof}

\subsection{Sensitivity, block sensitivity and certificate complexity for symmetric functions}
\label{ch:chap5}

For a symmetric function, all three measures (sensitivity, block sensitivity, certificate complexity) are $\Theta(n)$. In this section we explore the separations possible between these measures even within a constant factor. For this section assume $f$ is symmetric unless stated. 

Consider a symmetric function $f:{\{0,1\}}^n\to\{0,1\}$. We can define $f$ to be a function over Hamming weights i.e., $f(w)\in\{0,1\}$ , $w\in [n]$. Any property of an input, like local sensitivity, local block sensitivity and certificate complexity can also be expressed as a property of its Hamming weight. In this section, the input $w$ for $f$ will be considered as a Hamming weight $\in \{0,1, \cdots, n$ and \emph{not} a Boolean string $\in \{0,1\}^n$.

If the symmetric function $f$ is clear from the context, for any Hamming weight $z$, we denote $a_z$ and $b_z$ such that $a_z \le z \le b_z$ and for any Hamming weight $w$ such that $a_z \le w \le b_z$, $f(w)=f(z)$. Also, the value of $(b_z-a_z)$ is the maximum possible. In other words, we find the largest contiguous set of Hamming weights which include $z$ such that the function value remains the same.

\subsubsection{Sensitivity and Certificate Complexity}

First, we show the first part of Theorem~\ref{C_bs_s_Bounds}.

\begin{theorem}
    For a symmetric function $f:{\{0,1\}}^n\to\{0,1\}$, $\C(f) \le 2\cdot \s(f)$.
\end{theorem}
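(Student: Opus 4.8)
The plan is to pass to the Hamming-weight picture and pin down the certificate complexity exactly. For a symmetric $f$ and an input of Hamming weight $w$, a certificate is determined by how many $1$'s and how many $0$'s it reveals: if it reveals $p$ ones and $q$ zeros, the inputs consistent with it range over exactly the Hamming weights in $[p,\,n-q]$. Hence such a set certifies weight $w$ iff $f$ is constant on $[p,\,n-q]$ and $p \le w \le n-q$. Using the maximal constancy interval $[a_w,b_w]$ defined in this section, any valid certificate must satisfy $[p,\,n-q]\subseteq[a_w,b_w]$, i.e. $p \ge a_w$ and $q \ge n-b_w$; since revealing exactly $a_w$ ones and $n-b_w$ zeros (available because $a_w \le w \le b_w$) is a valid certificate, I would conclude $\C(f,w) = a_w + (n-b_w)$ and therefore $\C(f) = \max_w\,(a_w + n - b_w)$.

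Next I would fix a weight $w^{\ast}$ attaining this maximum and write $a = a_{w^{\ast}}$, $b = b_{w^{\ast}}$, so that $\C(f) = a + (n-b)$. By maximality of the interval, $a$ and $b$ sit at genuine boundaries of the function: if $a \ge 1$ then $f(a-1)\neq f(a)$, and if $b \le n-1$ then $f(b)\neq f(b+1)$. I would then exhibit two sensitive inputs. Take any input $y$ of Hamming weight $a$: when $a\ge 1$, flipping any of its $a$ ones gives weight $a-1$, and since $f(a-1)\neq f(a)$ each such flip is sensitive, so $\s(f)\ge \s(f,y)\ge a$. Symmetrically, an input $z$ of weight $b$ has $n-b$ zeros, and when $b\le n-1$ flipping any zero gives weight $b+1$ with $f(b+1)\neq f(b)$, so $\s(f)\ge n-b$. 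The degenerate cases $a=0$ and $b=n$ simply drop a zero term from $\C(f)$ and are covered by the single surviving inequality. Adding the two bounds gives $\C(f) = a + (n-b) \le \s(f) + \s(f) = 2\,\s(f)$, as desired.

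I expect the only delicate step to be the exact certificate-complexity formula, and specifically its lower bound $\C(f,w)\ge a_w+(n-b_w)$: I must argue carefully that any certificate revealing fewer than $a_w$ ones (respectively fewer than $n-b_w$ zeros) admits a consistent input whose weight falls below $a_w$ (respectively above $b_w$), which by maximality of $[a_w,b_w]$ carries a different $f$-value and thereby breaks the certificate. Once this characterization is in hand, the sensitivity bounds at the two endpoints are immediate and the remaining arithmetic, together with the boundary cases, is routine.
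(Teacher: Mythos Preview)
Your proof is correct and matches the paper's approach: both pass to the Hamming-weight picture, use the maximal constancy interval $[a_w,b_w]$, exhibit the certificate consisting of $a_w$ ones and $n-b_w$ zeros, and bound each summand by the sensitivity at the corresponding endpoint. The only cosmetic differences are that you also argue the lower bound $\C(f,w)\ge a_w+(n-b_w)$ (which the paper asserts without proof and which is not needed for the inequality) and that you work directly with a maximizing weight $w^{\ast}$ rather than routing through $\C_0(f)$ and $\C_1(f)$ as the paper does.
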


\begin{proof}
Let us consider some Hamming weight $z$. We can find the corresponding $a_z$ and $b_z$. Here, we can see that sensitivity $\s(f,a_z) = a_z$ and $\s(f,b_z) = n-b_z$.
For any input of Hamming weight $z$, we can define a certificate with $a_z$ $1$'s and $(n-b_z)$ $0$'s. With this certificate, only inputs with Hamming weights at least $a_z$ and at most $b_z$ are accepted, and all those have same function value as $z$. So $\C(f,z) = a_z+n-b_z = \s(f,a_z)+\s(f,b_z)$. Since $f(a_z)=f(b_z)=f(z)$, we can say that:

\begin{equation}
    \C_{f(z)}(f) \le \s(f,a_z)+\s(f,b_z) \le 2\cdot\s_{f(z)}(f).
\end{equation}

Since we know, that $\C(f)=max\{\C_0(f),\C_1(f)\}$ and  $\s(f)=max\{\s_0(f),\s_1(f)\}$, the above result can be extended to say that:

\begin{equation}
    \C(f)\le 2\cdot\s(f).
\end{equation}

\end{proof}

We construct a function which achieves this separation.

\begin{lemma}
    There exists a symmetric Boolean function $f$ such that $\C_1(f)=2\cdot\s_1(f)$, and $\C(f)=2\cdot\s(f)-4$.
\end{lemma}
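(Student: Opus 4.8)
The plan is to exhibit a single explicit symmetric function and verify both equalities by direct computation, reusing the sensitivity and certificate formulas for symmetric functions developed just above. I would take $n$ odd and define $f:\{0,1\}^n\to\{0,1\}$ by $f(x)=1$ exactly when $\abs{x}\in\{(n-1)/2,(n+1)/2\}$; that is, $f$ has a single $1$-block consisting of the two middle Hamming weights and is $0$ everywhere else. The design principle is that this $1$-block $[(n-1)/2,(n+1)/2]$ is symmetric about $n/2$ and has length exactly $2$: length $2$ makes each endpoint a \emph{pure} boundary (its other neighbour lies inside the block), so each endpoint's sensitivity comes from one side only, and the symmetry forces the two endpoint sensitivities to be equal. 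This is exactly the configuration needed to turn the inequality $\C_{f(z)}(f)\le \s(f,a_z)+\s(f,b_z)$ from the preceding theorem into an equality with the two terms equal.

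First I would compute the output-$1$ quantities. The only $1$-inputs are weights $(n-1)/2$ and $(n+1)/2$, both boundary weights of the block, so by the sensitivity computation in the preceding theorem $\s(f,(n-1)/2)=(n-1)/2$ (only the ones are sensitive) and $\s(f,(n+1)/2)=n-(n+1)/2=(n-1)/2$ (only the zeros are sensitive); interior $1$-inputs do not exist. Hence $\s_1(f)=(n-1)/2$. For the certificate, every $1$-input has $a_z=(n-1)/2$ and $b_z=(n+1)/2$, so $\C_1(f)=a_z+(n-b_z)=(n-1)/2+(n-1)/2=n-1$. This already yields $\C_1(f)=2\cdot\s_1(f)$.

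Next I would compute the overall quantities. The $0$-inputs adjacent to the block, at weights $(n-3)/2$ and $(n+3)/2$, dominate the sensitivity: at weight $(n-3)/2$ each of the $n-(n-3)/2=(n+3)/2$ zeros is sensitive (flipping it reaches the block), and by symmetry the same holds at $(n+3)/2$, while every other input has strictly smaller sensitivity. Thus $\s_0(f)=(n+3)/2$ and $\s(f)=\max(\s_0,\s_1)=(n+3)/2$. For certificate complexity, the two $0$-blocks $[0,(n-3)/2]$ and $[(n+3)/2,n]$ each have certificate $(n+3)/2$, so $\C(f)=\max(\C_0,\C_1)=\max((n+3)/2,\,n-1)=n-1$ for $n\ge 5$. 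Combining, $\C(f)=n-1=2\cdot\tfrac{n+3}{2}-4=2\cdot\s(f)-4$, giving the claimed bound with equality.

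The individual computations are routine once the formulas $\s(f,a_z)=a_z$, $\s(f,b_z)=n-b_z$, and $\C(f,z)=a_z+(n-b_z)$ from the surrounding development are invoked, so the only real design decision — and the main thing to get right — is the placement and length of the $1$-block: it must be the length-$2$ symmetric middle block, which is why $n$ is taken odd. Since the lemma only asserts existence, fixing any odd $n\ge 5$ suffices. I would also note that this construction attains $\C(f)=2\cdot\s(f)-4$ \emph{exactly}, so the additive constant in the statement cannot be improved for this family; the analogous even-$n$ block (the length-$3$ middle block) in fact falls a further two short of the bound, which is what forces the odd-$n$ choice.
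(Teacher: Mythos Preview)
Your proposal is correct and takes essentially the same approach as the paper: the paper also defines $f(x)=1$ iff $\abs{x}\in\{(n-1)/2,(n+1)/2\}$ and records the same values $\s_0(f)=\C_0(f)=(n+3)/2$, $\s_1(f)=(n-1)/2$, $\C_1(f)=n-1$, concluding $\C_1(f)=2\s_1(f)$ and $\C(f)=2\s(f)-4$. Your write-up is simply more detailed in justifying each of these values and in noting the $n\ge 5$ requirement for $\C(f)=\C_1(f)$.
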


\begin{proof}
We can define a function as follows: $f:{\{0,1\}}^n\to\{0,1\}$, $f(x) = 1$ if $\pVert*{x} \in \{\frac{n-1}{2},\frac{n+1}{2}\}$, and $0$ everywhere else. For this function, we shall see that $\s_0(f)=\C_0(f)=\frac{n+3}{2}$, $\s_1(f)=\frac{n-1}{2}$ $\C_1(f)=n-1$. So, here we have $\C(f)=\C_1(f)$ and $\s(f)=\s_0(f)$. So, $\C_1(f)=2\cdot\s_1(f)$ and $\C(f)=2\cdot\s(f)-4$.
\end{proof}

\subsubsection{Sensitivity and Block sensitivity} \label{sec:s_bs}

Now we show the second part of Theorem~\ref{C_bs_s_Bounds}.

We will first show that $\bs(f)$ is bigger than $\s(f)$ by at most a factor of $3/2$ for any total symmetric function $f$. Additionally, the proof will output a function where this separation is achieved. 

The main result of this section is the following theorem.
\begin{theorem}
    \label{bsfFinalTheorem}
    For any total symmetric Boolean function f, $\frac{\bs(f)}{\s(f)} \leq \frac{3}{2}$.  
\end{theorem}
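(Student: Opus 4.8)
The plan is to reduce everything to explicit formulas for sensitivity and block sensitivity of a symmetric function expressed through the interval structure of $f$ over Hamming weights, and then close the bound with a short case analysis. Throughout I view $f$ as a function of the Hamming weight, and for a weight $z$ I write $[a_z,b_z]$ for its maximal constant interval (the notation already set up in Section~\ref{ch:chap5}), abbreviating $p=z$, $q=n-z$, $\alpha=z-a_z$, and $\beta=b_z-z$.

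First I would pin down the two measures locally. For sensitivity, flipping a single $1$ in an input of weight $z$ moves to weight $z-1$ and flipping a single $0$ moves to weight $z+1$; hence a weight $z$ has sensitive bits only at the boundaries of its interval, and in particular $\s(f,z)\ge p$ whenever $\alpha=0$ (all $p$ ones are sensitive) and $\s(f,z)\ge q$ whenever $\beta=0$. For block sensitivity I would establish
\[ \bs(f,z)=\left\lfloor \frac{p}{\alpha+1}\right\rfloor+\left\lfloor \frac{q}{\beta+1}\right\rfloor. \]
The key observation is that a block flipping some $1$'s and some $0$'s is sensitive exactly when the resulting weight leaves $[a_z,b_z]$; a ``left'' block (landing below $a_z$) must flip at least $\alpha+1$ ones, and a ``right'' block must flip at least $\beta+1$ zeros. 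Counting the available ones and zeros separately bounds the number of disjoint sensitive blocks from above, and taking pure blocks of exactly $\alpha+1$ ones and $\beta+1$ zeros shows the bound is achieved.

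Next I would fix a weight $z^*$ attaining $\bs(f)=\bs(f,z^*)$, with interval $[a,b]$, and first dispose of the boundary intervals. If $a=0$ the left floor vanishes and a short monotonicity check shows $\bs(f,z^*)\le n-b\le \s(f)$ (the right boundary weight $b$ already has sensitivity $n-b$), and symmetrically if $b=n$; in either case the ratio is at most $1$. So I may assume both neighbouring intervals exist, which supplies the two crucial sensitivity lower bounds coming from the adjacent boundary weights $b+1$ and $a-1$:
\[ \s(f)\ge p+\beta+1 \qquad\text{and}\qquad \s(f)\ge q+\alpha+1. \]

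Finally I would split on whether $\alpha$ and $\beta$ vanish. If $\alpha,\beta\ge1$ each floor is at most half its numerator, so $\bs(f)\le (p+q)/2=n/2$, while $\s(f)\ge \max(p,q)+2\ge n/2+2$, giving ratio below $1$. If $\alpha=\beta=0$ the interval is a single weight, so $\bs(f)=p+q=n$ and also $\s(f,z^*)=p+q=n$, giving ratio $1$. The only binding case is exactly one of them zero, say $\alpha=0,\ \beta\ge1$ (the other being symmetric): here $\bs(f)\le p+q/2$ while $\s(f)\ge\max(p+2,\,q+1)$, and a two-line subcase according to whether $p\ge q-1$ or $p\le q-2$ yields $p+q/2\le \tfrac32\,\s(f)$. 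Tracking this last case also reveals that the constant $\tfrac32$ is exactly what the configuration $\alpha=0,\ \beta=1,\ p\approx q\approx n/2$ forces, explaining why no smaller constant is possible. The main work, and the step most prone to error, is the block-sensitivity formula together with the careful treatment of the boundary and degenerate ($\alpha$ or $\beta$ equal to $0$) cases; once those are in hand the closing inequality is routine arithmetic.
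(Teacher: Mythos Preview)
Your argument is correct and is a genuinely different route from the paper's. Both proofs rest on the same local formula $\bs(f,z)=\lfloor p/(\alpha+1)\rfloor+\lfloor q/(\beta+1)\rfloor$ (the paper's Lemma~\ref{blockSensitivityFormula}), but they diverge from there. The paper proceeds structurally: it first argues that the ratio $\bs(f)/\s(f)$ can only increase when $f$ is replaced by the indicator of the single interval $[a_{z^*},b_{z^*}]$, then shows the ratio is maximised when that interval has length~$2$, and finally optimises over the position of the interval to exhibit the extremal function explicitly. You instead stay with the original $f$, extract the two sensitivity witnesses $\s(f)\ge b+1$ and $\s(f)\ge n-a+1$ from the adjacent boundary weights, and close with a direct case split on whether $\alpha$ and $\beta$ vanish.

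Your approach is shorter and avoids the intermediate ``reduce to a simpler function'' lemmas; the paper's approach is more constructive and naturally produces the tight example $G$ as a by-product of the optimisation, whereas in your argument the extremal configuration emerges only implicitly from tracking when the inequalities in the $\alpha=0,\ \beta=1$ subcase become tight. Both the boundary-interval disposal ($a=0$ or $b=n$ forces ratio $\le 1$ via $\lfloor q/(\beta+1)\rfloor\le n-b=\s(f,b)$) and the final arithmetic in the binding case check out as you sketched them.
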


The proof of this theorem will require multiple results to calculate the block sensitivity for a particular hamming weight, and narrow down the search for the function which optimises the above ratio. First, we obtain a general formula to calculate the local block sensitivity at any input of given Hamming weight $z$, $\bs(f,z)$, in terms of $a_z$ and $b_z$.

\begin{lemma}
\label{blockSensitivityFormula}
    For any total symmetric Boolean function f, for any input with Hamming weight $z \in [n]$ and their corresponding $a_z$ and $b_z$ values, $\bs(f,z)$ can be characterised as follows:
    \begin{equation*}
    \bs(f,z)=\begin{cases}
    \left \lfloor \frac{z}{z-a_z+1} \right \rfloor + \left \lfloor \frac{n-z}{b_z-z+1} \right \rfloor, & \text{if $a_z \neq 0$ and $b_z \neq n$  }\\
    \left \lfloor \frac{n-z}{b_z-z+1} \right \rfloor, & \text{if $a_z = 0$ and $b_z \neq n$  }\\
    \left \lfloor \frac{z}{z-a_z+1} \right \rfloor, & \text{if $a_z \neq 0$ and $b_z = n$  }\\
    0, & \text{if $a_z = 0$ and $b_z = n$, (constant function)}\\
  \end{cases}
\end{equation*}
where $a_z$ and $b_z$ are defined as above. 
\end{lemma}

To prove this, we shall first see that for any given symmetric function $f$, we can achieve block sensitivity at any input $x$ using blocks of only $1$'s or only $0$'s.

\begin{lemma}
\label{oneTypeBlocksOnly}
    For a total symmetric function $f:\{0,1\}^n \rightarrow \{0,1\}$ and any input $x\in{\{0,1\}}^n$, define $l=\bs(f,x)$. There exist sensitive blocks $B_1,B_2...,B_l$ in $x$ such that, 
    \[ \forall k\in [l],~\forall i,j\in B_k,~~~ x_i=x_j. \]
\end{lemma}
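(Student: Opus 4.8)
The plan is to start from any optimal family of disjoint sensitive blocks and then \emph{shrink} each block to a monochromatic one that induces the same change in Hamming weight. Concretely, let $w = \abs{x}$ and let $B_1, \dots, B_l$ be disjoint sensitive blocks of $x$ realizing $\bs(f,x) = l$. For each $k$, write $p_k$ for the number of coordinates $i \in B_k$ with $x_i = 1$ and $q_k$ for those with $x_i = 0$. Because $f$ is symmetric, the value $f(x^{B_k})$ depends only on the Hamming weight $\abs{x^{B_k}} = w - p_k + q_k$, so $B_k$ being sensitive is equivalent to $f(w) \neq f(w - p_k + q_k)$ (viewing $f$ as a function of Hamming weight). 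In particular a sensitive block must satisfy $p_k \neq q_k$, since otherwise the Hamming weight would be unchanged and the block could not be sensitive.

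The key step is to observe that the net weight change $\delta_k = q_k - p_k$ can be reproduced by a monochromatic sub-block of $B_k$. If $\delta_k > 0$ then $B_k$ contains $q_k \geq \delta_k$ zero-coordinates; I would let $B_k'$ be any $\delta_k$ of them, so that flipping $B_k'$ raises the Hamming weight by exactly $\delta_k$, landing on the same weight $w - p_k + q_k$ as flipping $B_k$. If $\delta_k < 0$, symmetrically take $B_k'$ to be any $\abs{\delta_k}$ of the $p_k$ one-coordinates of $B_k$. In either case $B_k'$ is nonempty, consists of coordinates on which $x$ is constant, and satisfies $f(x^{B_k'}) = f(x^{B_k}) \neq f(x)$, hence is sensitive.

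Finally I would note that $B_k' \subseteq B_k$ for every $k$, so the new blocks remain pairwise disjoint, and there are still $l$ of them; since $l = \bs(f,x)$ is already the maximum number of disjoint sensitive blocks, the family $B_1', \dots, B_l'$ is an optimal collection consisting entirely of monochromatic blocks, which is exactly the claim. I do not expect a genuine obstacle here: the whole argument rests on the single observation that, for a symmetric function, a block's sensitivity is determined solely by the signed weight change it produces, so one can always pass to a contained monochromatic sub-block realizing the same change. The only points requiring a moment's care are checking that the chosen sub-block is nonempty — guaranteed precisely by $\delta_k \neq 0$ — and that passing to a sub-block never destroys disjointness, which is immediate since each $B_k'$ sits inside the original $B_k$.
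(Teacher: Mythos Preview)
Your proof is correct and follows essentially the same approach as the paper: both arguments use the observation that, for a symmetric $f$, a block's sensitivity depends only on the net Hamming-weight change, and hence each sensitive block can be replaced by a monochromatic sub-block inducing the same change. The paper phrases the shrinking step iteratively (repeatedly deleting a $0$--$1$ pair from $B$), whereas you compute the target sub-block in one shot via $\delta_k = q_k - p_k$; these are the same idea.
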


\begin{proof}
If there are two  indices $i$ and $j$ in a sensitive block $B$ such that $x_i=1$ and $x_j=0$, we can replace this with another block $B'=B-\{i,j\}$. $B'$ is also sensitive because the Hamming weight of the input caused by flipping the bits in $B'$ is $\pVert*{x^{B'}}=\pVert*{x^B}+1-1=\pVert*{x^B}$. So, $f(x^{B'})=f(x^B)$. We can continue this till we get a block $\widetilde{B}$ such that for $i,j\in\widetilde{B}, x_i=x_j$. Since $\widetilde{B}\subseteq B$ and $f(x^{\widetilde{B}})=f(x^B)$, we can replace $B$ with $\widetilde{B}$. So, for any input $x$, we can have $\bs(f,x)$ disjoint sensitive blocks, such that for any block $B$ in this, for $i,j\in B$, $x_i=x_j$.
\end{proof}

\begin{proof}[Proof of Lemma~\ref{blockSensitivityFormula}]
\label{bsFormulaProof}

For any input of Hamming weight $z\in[a_z,b_z]$, a block of 1s of size $(z-a_z+1)$ is a sensitive block, as flipping those 1s will give an input with the Hamming weight to $(a_z-1)$ which will have a different function value. Similarly, any block of 0s of size $(b_z-z+1)$ is also a sensitive block. We know from  Lemma~\ref{oneTypeBlocksOnly} that for this input, we can define $\bs(f,z)$ disjoint sensitive blocks such that the value of input is same at any two indices in a block. Suppose there is a block $B$ where input value is 1 at all the indices. Flipping indices of this block will give a new input with Hamming weight $z-\pVert*{B}$. Since this block is sensitive, we know that $z-\pVert*{B}<a_z$ i.e., $\pVert*{B}>(z-a_z)$. We can now define a new block $B_{new}$ such that $B_{new}\subseteq B$ and $\pVert*{B_{new}}=(z-a_z+1)$. Clearly, $B_{new}$ is also sensitive. So, if we have $\bs(f,z)$ disjoint sensitive blocks for the input, we can replace each block $B$ of $k$ 1s with a block $B_{new}$ of $(z-a_z+1)$ 1s.  Similarly, we can also replace any block $C$ of $k$ 0s with a block $C_{new}$ of $(b_z-z+1)$ 0s. 

This combined with Lemma~\ref{oneTypeBlocksOnly} tells us that for an input of Hamming weight $z$ at $[a_z..z..b_z]$ being the largest contiguous block of Hamming weights including $z$ where the function value remains the same, we can define $\bs(f,z)$ disjoint sensitive blocks such that every block contains either $(z-a_z+1)$ indices where $x_i=1$ or $(b_z-z+1)$ indices where $x_i=0$. It is necessary to note that if $a_z=0$, no sensitive block having just 1s is possible. Similarly, if $b_z=n$, no sensitive block having just 0s is possible,

So, the block sensitivity can be written as $\bs(f,z) = \floor*{\frac{z}{z-a_z+1}}+\floor*{\frac{n-z}{b_z-z+1}}$. It is useful to note that when $a=0$, $\bs(f,z) = \floor*{\frac{n-z}{b_z-z+1}}\le \s(f,b_z)$ and when $b_z=n$, $\bs(f,z) = \floor*{\frac{z}{z-a_z+1}} \le \s(f,a_z)$.

\end{proof}

    Using Lemma~\ref{blockSensitivityFormula}, we restrict the class of functions where $\bs(f)/\s(f)$ obtains its maximum value.
    \begin{lemma}
    Suppose $f$ is a symmetric Boolean function such that $\bs(f) > \s(f)$. There exist  $f', a, b$ where $a \geq 2, b \leq n-2$ such that
    \begin{itemize}
        \item $f'(z) = 1$ iff $a \leq z \leq b$, and 
        \item $\frac{\bs(f')}{\s(f')} \geq \frac{\bs(f)}{\s(f)} $.
    \end{itemize} 
    \end{lemma}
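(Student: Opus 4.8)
The plan is to build $f'$ directly from a block-sensitivity maximizer of $f$ and argue that passing to the indicator of a single interval can only help the ratio. Concretely, I would fix a Hamming weight $z^*$ with $\bs(f,z^*)=\bs(f)$ and let $a := a_{z^*}$, $b := b_{z^*}$ be the endpoints of the maximal interval on which $f$ is constant around $z^*$, so that $f$ changes value at both $a-1 \to a$ and $b \to b+1$. Then I define $f'$ by $f'(z)=1$ iff $a \le z \le b$. The two quantitative claims to establish are $\bs(f') \ge \bs(f)$ and $\s(f') \le \s(f)$, which together give $\bs(f')/\s(f') \ge \bs(f)/\s(f)$; separately I must verify the boundary conditions $a \ge 2$ and $b \le n-2$.

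For the block sensitivity, observe that under $f'$ the maximal constant interval containing $z^*$ is exactly $[a,b]$, so $a_{z^*}(f')=a$ and $b_{z^*}(f')=b$. Since (as argued below) $a \ne 0$ and $b \ne n$, both $\bs(f,z^*)$ and $\bs(f',z^*)$ fall under the first case of Lemma~\ref{blockSensitivityFormula}, whose value depends only on $z^*,a,b,n$; hence $\bs(f') \ge \bs(f',z^*) = \bs(f,z^*) = \bs(f)$.

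For the sensitivity, the only weights at which $f'$ is sensitive are $a-1,a,b,b+1$, and using $a \ge 2$, $b \le n-2$, $a<b$ (so the two transitions are interior and disjoint) I would compute $\s(f') = \max\{\,n-a+1,\ a,\ n-b,\ b+1\,\}$. Each of these four numbers is also a lower bound on $\s(f)$ at the same weight, because $f$ carries \emph{both} of these transitions and can only be more sensitive there (e.g.\ at weight $a-1$, $f(a-1)\ne f(a)$ already forces all $n-a+1$ zero-bits to be sensitive). This yields $\s(f') \le \s(f)$, completing the ratio comparison.

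The main obstacle — and the only place the hypothesis $\bs(f)>\s(f)$ is used — is establishing $a \ge 2$ and $b \le n-2$. If $a=1$ then $f(0)\ne f(1)$, so the weight-$0$ input is fully sensitive, giving $\s(f)=n \ge \bs(f)$, a contradiction. The case $a=0$ is more delicate: here $\bs(f,z^*)=\lfloor (n-z^*)/(b-z^*+1)\rfloor$, and I would check that $(n-z)/(b-z+1)$ is nondecreasing in $z$ on $[0,b]$ (its derivative has sign $n-b-1 \ge 0$, with $b\le n-1$ since otherwise $f$ is constant), so $\bs(f,z^*) \le n-b \le \s(f,b) \le \s(f)$, again contradicting $\bs(f)>\s(f)$. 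The symmetric arguments rule out $b=n-1$ and $b=n$, and the same reasoning excludes $a=b$ (which would force $\bs(f,z^*)=n=\s(f)$). With every boundary case eliminated, the construction satisfies all requirements.
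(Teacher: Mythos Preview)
Your proposal is correct and follows essentially the same approach as the paper: pick a block-sensitivity maximizer $z^*$, set $f'$ to be the indicator of the maximal constant interval $[a_{z^*},b_{z^*}]$, use Lemma~\ref{blockSensitivityFormula} to get $\bs(f')\ge\bs(f)$, and compare sensitivities at the four boundary weights to get $\s(f')\le\s(f)$. Your treatment of the degenerate cases $a\in\{0,1\}$, $b\in\{n-1,n\}$, and $a=b$ is in fact a bit more explicit than the paper's; the paper records these as properties that $f$ must satisfy under the hypothesis $\bs(f)>\s(f)$ but is somewhat terse about why the \emph{maximizer} $z'$ itself avoids $a_{z'}=0$ or $b_{z'}=n$, which your monotonicity argument for $(n-z)/(b-z+1)$ makes fully precise.
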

    
    \begin{proof}
    Since $\bs(f) > \s(f)$, we can assume that $f$ satisfies the following three properties.
    \begin{enumerate}
        \item It has an input with Hamming weight $z$ such that $a_z \neq 0$ as well as $b_z \neq n$. If there was no $z$ with $a_z \neq 0$ and with $b_z \neq n$, then for any $z \in [n]$, $\bs(f,z)$ is either $\left \lfloor \frac{n-z}{b_z-z+1} \right \rfloor$ or $\left \lfloor \frac{z}{z-a_z+1} \right \rfloor$ as $f$ is not a constant function. As seen in the proof for Lemma~\ref{blockSensitivityFormula}, these values are at max $\s(f,b)$ and $\s(f,a)$ respectively. Thus $\bs(f) = \s(f)$.
        \item It does not have any inputs with Hamming weight $z$ such that $a_z$ = $b_z$. For any input with Hamming weight $z$ such that  $a_z$ = $b_z$, $\s(f,z) = n$. Therefore, $\bs(f) = \s(f)$.
        \item It does not have any inputs with Hamming weight $z$ such that $a_z = 1$ or $b_z = n-1$. If we have $a_z = 1$, then $\s(f) = \s(f,0) = n$. Similarly if $b_z = n-1$, then $\s(f) = \s(f,n) = n$. Thus $\bs(f) = \s(f)$.  
    \end{enumerate}
    Given an $f$ satisfying the above three properties, let $z'$ be the Hamming weight where block sensitivity is achieved, i.e., $\bs(f) = \bs(f,z')$. Then the required $f'$ for the Lemma statement can be defined as follows:
    \begin{equation*}
        f'(z) = \begin{cases}
        1, & \textit{ if $z \in [a_{z'}, b_{z'}]$} \\
        0, & \textit{otherwise.} \\
        \end{cases}
    \end{equation*}
    
    We know that $a_{z'} \geq 2$ and $b_{z'} \leq n-2$.    
    From Lemma~\ref{blockSensitivityFormula}, we can say that $\bs(f,z') = \bs(f',z')$. 
    So $\bs(f) = \bs(f,z') = \bs(f',z') \leq \bs(f')$ (actually the last inequality is equality because we have constructed our function $f'$ in such a way that this $z'$ will give us the maximum block sensitivity). 
    
    Similarly, the function $f'$ is sensitive at only 4 points, hence 
    \begin{center}
      $\s(f')$ = max($\s(f,a_{z'}-1)$, $\s(f,a_{z'})$, $\s(f,b_{z'})$, $\s(f,b_{z'}+1)$).   
    \end{center}
    
    Thus, we can say that $\s(f') \leq \s(f)$ (on these aforementioned points sensitivity for both functions is same, though $f$ may have other sensitive points). 
    
    In case of functions with multiple Hamming weights which give maximum $\bs(f)$, just consider the Hamming weight which gives us a smaller $\s(f')$ value. 
    
    Thus $\frac{\bs(f')}{\s(f')} \geq \frac{\bs(f)}{\s(f)}$, proving the Lemma. 
    \end{proof}
    
We saw that the ratio of $\bs(f)$ and $\s(f)$ is maximized by a function which is non-zero only in a contiguous block. We further restrict the optimal function and claim that the length of this contiguous block (where the function is non-zero) should be $2$.

\begin{lemma}
\label{maxSep}
Let $a < b$ be two integers such that $a\geq 2$ and $b \leq n-2$. Consider functions of the following form,
    
\begin{equation*}
f(z)=\begin{cases}
1, & \text{if $a \leq z \leq b$ }\\
0, & \textit{otherwise}.
\end{cases}
\end{equation*}

The separation $\frac{\bs(f)}{\s(f)}$ is maximum when $b-a = 1$. 
\end{lemma}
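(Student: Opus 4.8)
The plan is to reduce everything to two explicit quantities and then compare a general block to a length-$2$ block directly. First I would pin down the sensitivity. Since $f$ is the indicator of the contiguous block $[a,b]$, it is sensitive only at the four boundary Hamming weights, and a short computation gives $\s(f,a-1)=n-a+1$, $\s(f,a)=a$, $\s(f,b)=n-b$, $\s(f,b+1)=b+1$, with $\s(f,z)=0$ elsewhere. Using $a\le b$ (so $n-a+1>n-b$ and $b+1>a$), this simplifies to the clean form $\s(f)=\max\{\,n-a+1,\;b+1\,\}$. Next, from Lemma~\ref{blockSensitivityFormula}, for $z$ inside the block ($a\le z\le b$, with $a\neq 0,\ b\neq n$) we have $\bs(f,z)=\lfloor z/(z-a+1)\rfloor+\lfloor (n-z)/(b-z+1)\rfloor$, while for $z$ in either $0$-region only a single floor term survives; the same lemma's analysis shows those single-term values are bounded by $n-a+1$ and $b+1$, i.e. by $\s(f)$. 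Hence whenever $\bs(f)>\s(f)$, the block-sensitivity maximum must be attained at some $z\in[a,b]$.

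The key structural step is to show the maximizing $z$ is in fact an \emph{endpoint} of the block. Writing $p=z-a+1$ and $q=b-z+1$, I would treat the interior case $p\ge 2,\ q\ge 2$ (i.e. $a<z<b$): then $\bs(f,z)\le z/p+(n-z)/q\le z/2+(n-z)/2=n/2$. On the other hand $a=z-p+1\le z-1$ and $b=z+q-1\ge z+1$, so $\s(f)\ge\max\{n-z+2,\;z+2\}\ge (n+4)/2>n/2$. Thus every interior point satisfies $\bs(f,z)<\s(f)$, and combined with the $0$-region bound this forces $\bs(f)=\max\{\bs(f,a),\,\bs(f,b)\}$ as soon as $\bs(f)>\s(f)$. (If instead $\bs(f)\le\s(f)$ the ratio is at most $1$, already below that of the centered length-$2$ block, so such blocks are irrelevant to the maximum.)

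Finally I would carry out the reduction to $b-a=1$. By the reflection symmetry $z\mapsto n-z$ (which sends $a\mapsto n-b,\ b\mapsto n-a$) assume without loss of generality that $\bs(f)=\bs(f,a)=a+\lfloor (n-a)/(b-a+1)\rfloor$, and suppose $b-a\ge 2$. Define $f'$ to be the indicator of the length-$2$ block $[a,a+1]$ (which is admissible, since $a\ge 2$ and $a+1\le n-3$). Then $\bs(f')\ge \bs(f',a)=a+\lfloor (n-a)/2\rfloor\ge a+\lfloor (n-a)/(b-a+1)\rfloor=\bs(f)$, using that $b-a+1\ge 3$ makes the floor term no smaller; and $\s(f')=\max\{n-a+1,\,a+2\}\le \max\{n-a+1,\,b+1\}=\s(f)$, since $b+1\ge a+3$. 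Therefore $\bs(f')/\s(f')\ge \bs(f)/\s(f)$, so shrinking any longer block to length $2$ never decreases the ratio, which proves the maximum occurs at $b-a=1$.

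The only real subtlety is the interplay between the floor functions and the maximum defining $\s(f)$, which a priori makes naive one-sided shrinking non-monotone; the observation that resolves this — and which I expect to be the crux — is that interior block points cannot beat the sensitivity, so the extremum always sits at an endpoint where the block-sensitivity formula degenerates to a single clean term amenable to direct comparison with a length-$2$ block.
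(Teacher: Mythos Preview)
Your argument is correct and follows the same overall strategy as the paper---reduce an arbitrary block $[a,b]$ to a length-$2$ block without decreasing $\bs/\s$---but your route to the reduction is cleaner. The paper first uses the reflection $z\mapsto n-z$ to force $\s(f)=b+1$, then takes the maximizer $z'\in[a,b]$ of $\bs(f,\cdot)$ and case-splits on $z'=a$ versus $z'>a$, sliding the block to $[a,a+1]$ or $[a-1,a]$ respectively; the comparison $\s(f')\le\s(f)$, $\bs(f')\ge\bs(f)$ is then asserted via a monotonicity claim ``$a'\ge a$ and $b'\le b$''. Your additional structural step---showing that every \emph{interior} point $a<z<b$ satisfies $\bs(f,z)\le n/2<(n+4)/2\le\s(f)$---pins the maximizer to an endpoint from the outset. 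That lets you dispense with the case split entirely: after one reflection you may assume $\bs(f)=\bs(f,a)$, slide to $[a,a+1]$, and verify $\bs(f')\ge\bs(f)$ and $\s(f')\le\s(f)$ by two one-line inequalities on the floor term and on $\max\{n-a+1,\cdot\}$. This is both shorter and more robust than the paper's argument, whose second case actually has $a'=a-1<a$, so the stated monotonicity does not literally hold there and some additional justification is needed. In short, the endpoint lemma is your genuinely new ingredient, and it buys both brevity and rigor.
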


    \begin{proof}
    For any function in the above form, $\s(f)$ is $\max(n-a+1, n-b, a, b+1)$. Since $a<b$, we can further restrict $\s(f)$ to be $\max(n-a+1, b+1)$.  Since $f'(z) = f(n-z)$ has the same block sensitivity and sensitivity, we can assume $ n \leq  a + b$. So $\s(f) = b+1$. 
    
    
    
  From Lemma~\ref{blockSensitivityFormula}, the maximum $\bs(f)$ will be achieved when $z \in [a,b]$, hence our task is to maximise the ratio 
  \begin{equation*}
      \frac{\bs(f)}{\s(f)} = \frac{\left \lfloor \frac{z}{z-a+1} \right \rfloor + \left \lfloor \frac{n-z}{b-z+1} \right \rfloor}{b+1},
  \end{equation*}
  when $z\in [a,b]$.

 
 Let $z'$ be the Hamming weight where block sensitivity for function $f$ is achieved. We know that $z' \in [a,b]$. We construct a new function $f'$ which is non-zero only in a contiguous block between $a'$ and $b'$, and $b' -a' = 1$. 
 \begin{itemize}
     \item If $z' = a$ : 
         \begin{equation*}
         f'(z)=\begin{cases}
         1, & \text{if $a' = a \leq z \leq a+1 = b'$ }\\
         0, & \textit{otherwise}.
        \end{cases}
        \end{equation*}
     
     \item If $z' > a$ :
        \begin{equation*}
        f'(z)=\begin{cases}
        1, & \text{if $a' = a-1 \leq z \leq a = b'$ }\\
        0, & \textit{otherwise}.
        \end{cases}
        \end{equation*}
 \end{itemize}
 
Since $a'\geq a$ and $b' \leq b$ in both cases, $\s(f) \geq \s(f')$ and $\bs(f) \leq \bs(f')$, implying $\bs(f')/\s(f') \geq \bs(f)/\s(f)$. 




Since for every function $f$ which is non-zero only in a contiguous block between $a$ and $b$, there exists a function $f'$, as defined above, with a better $\bs(f)/\s(f)$ ratio, we can say that the separation $\bs(f)/\s(f)$ is maximum when $b-a = 1$.

\end{proof}
    
Now, it is easy to see the proof of Theorem~\ref{bsfFinalTheorem} and construct a function which achieves the maximum possible separation.
    
\begin{proof}[Proof of Theorem~\ref{bsfFinalTheorem}]
We know from Lemma~\ref{maxSep} that the maximum separation for $\bs(f)/\s(f)$ happens when $b=a+1$. Thus we can restrict our attention to only functions of the form:
\begin{equation*}
    f(z)=\begin{cases}
    1, & \text{if $a \leq z \leq a+1$ } \\
    0, & \text{otherwise.}
  \end{cases}
\end{equation*}
Notice that due to symmetry, it suffices to look at $n/2 \leq a \leq n-3$.
We see that the maximum gap between $\bs(f)$ and $\s(f)$ happens when $a=n/2$. Define $G$ to be the function,
\begin{equation*}
    G(z)=\begin{cases}
    1, & \text{ $n/2 \leq z \leq n/2+1$ } \\
    0, & \text{otherwise.}
  \end{cases}
\end{equation*}

Notice that $\bs(G) = 3n/4$ and $\s(G) = n/2+2$. Since this is the function with best separation between $\bs$ and $\s$, we get 
\[ \frac{\bs(f)}{\s(f)} \leq 3/2 \]
for any total symmetric function $f$. Clearly this bound is tight, because $G$ achieves this factor.
\end{proof}


\subsection{Bounds on \texorpdfstring{$\bs$}{}, \texorpdfstring{$\RC$}{} and \texorpdfstring{$\MM$}{} for Gap Majority}
\label{sec:gapMajappendix}

\begin{lemma}
\label{bsGapMaj}
$\bs(\gm_n) = \theta(\sqrt{n})$. 
\end{lemma}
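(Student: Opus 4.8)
The plan is to prove matching bounds of order $\sqrt{n}$ by exploiting the defining feature of the partial function $\gm_n$: its domain contains only inputs of Hamming weight $n/2 - \sqrt{n}$ (value $0$) and $n/2 + \sqrt{n}$ (value $1$), so a block $B$ can be sensitive at $x$ only if flipping it keeps $x^B$ in the domain, i.e. moves the weight from one of these two values exactly to the other. The key computation I would record first is the weight-change identity: if $x$ has weight $n/2 - \sqrt{n}$ and $B$ contains $a$ ones and $b$ zeros of $x$, then $\abs{x^B} = \abs{x} - a + b$, so $B$ is sensitive iff $b - a = 2\sqrt{n}$. By swapping the roles of $0$ and $1$, the symmetric statement holds for inputs of weight $n/2 + \sqrt{n}$, and since these two output classes are related by bit-complementation it suffices to analyse inputs of weight $n/2 - \sqrt{n}$.

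For the lower bound, I would fix an input $x$ with $\abs{x} = n/2 - \sqrt{n}$, which has $n/2 + \sqrt{n}$ zero-coordinates, and partition those zeros into disjoint blocks of exactly $2\sqrt{n}$ coordinates. Each such block has $a = 0$ and $b = 2\sqrt{n}$, hence $b - a = 2\sqrt{n}$, so flipping it raises the weight to $n/2 + \sqrt{n}$ and is therefore sensitive. The number of disjoint blocks obtained is $\lfloor (n/2 + \sqrt{n})/(2\sqrt{n}) \rfloor = \Omega(\sqrt{n})$, giving $\bs(\gm_n) \geq \bs(\gm_n, x) = \Omega(\sqrt{n})$.

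For the upper bound, I would use the same identity in the reverse direction: any sensitive block $B$ must satisfy $b - a = 2\sqrt{n}$, so its size is $\abs{B} = a + b \geq b - a = 2\sqrt{n}$, and the same lower bound on block size holds for weight-$(n/2+\sqrt{n})$ inputs by symmetry. Because the blocks witnessing $\bs(\gm_n, x)$ are pairwise disjoint subsets of $[n]$, their total size is at most $n$, so there can be at most $n/(2\sqrt{n}) = \sqrt{n}/2 = O(\sqrt{n})$ of them. Combining the two estimates yields $\bs(\gm_n) = \Theta(\sqrt{n})$.

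The main point requiring care, and the place where this argument genuinely differs from the total symmetric case handled by Lemma~\ref{blockSensitivityFormula}, is that block sensitivity of a \emph{partial} function must be defined with the constraint $x^B \in \mathrm{Dom}(\gm_n)$; it is precisely this constraint that simultaneously forces every sensitive block to have size at least $2\sqrt{n}$ (driving the upper bound) and guarantees that pure-zero blocks of size $2\sqrt{n}$ are sensitive (driving the lower bound). I do not expect a serious technical obstacle beyond stating this domain condition cleanly and handling the floor in the counting, though one should note the implicit assumption in Definition~\ref{defGM} that $\sqrt{n}$ (and hence $n/2 \pm \sqrt{n}$) is an integer, so that the block of size $2\sqrt{n}$ is meaningful.
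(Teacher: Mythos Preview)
Your proposal is correct and follows essentially the same approach as the paper: both obtain the lower bound by partitioning the $n/2+\sqrt{n}$ zero-coordinates of a weight-$(n/2-\sqrt{n})$ input into disjoint blocks of size $2\sqrt{n}$. Your upper-bound argument (every sensitive block has size at least $2\sqrt{n}$, so disjointness caps their number at $n/(2\sqrt{n})$) is actually more explicit than the paper's, which simply asserts that ``each block comprises of $2\sqrt{n}$ zeroes'' and counts those; your version cleanly handles mixed blocks via the identity $\abs{B}=a+b\ge b-a=2\sqrt{n}$, whereas the paper is tacitly relying on the reduction-to-pure-blocks idea of Lemma~\ref{oneTypeBlocksOnly}.
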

\begin{proof} 
We know that the Hamming weights on which $\gm_n$ is defined differ in their Hamming weights by exactly $2\sqrt{n}$ [Definition~\ref{defGM}]. To compute $\bs_{0}(\gm_n)$, each block comprises of $2\sqrt{n}$ zeroes which are flipped to ones. The number of distinct blocks that are then possible are $(n/2 + \sqrt{n})/2\sqrt{n}$ which is $\theta(\sqrt{n})$. Similar argument holds true for $\bs_{1}(\gm_n)$. 
 Thus $\bs(\gm_n)$ = max$\{\bs_{0}(\gm_n), \bs_{1}(\gm_n)\}$ = $\theta(\sqrt{n})$.
\end{proof}

\begin{lemma}
\label{RCGapMaj}
$\RC(\gm_n) = \theta(\sqrt{n})$. 
\end{lemma}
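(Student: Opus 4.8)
The plan is to prove the two bounds separately, reusing the structure already established for $\bs(\gm_n)$. For the \textbf{lower bound}, recall that randomized certificate complexity dominates block sensitivity up to a constant: a family of disjoint sensitive blocks forces any randomized verifier to query inside each block with probability at least $1/2$, so $\RC(f) = \Omega(\bs(f))$ (equivalently $\bs(f) \le \fbs(f) = \Theta(\RC(f))$ via \cite{tal_properties_2013}). Hence the previous lemma gives $\RC(\gm_n) = \Omega(\bs(\gm_n)) = \Omega(\sqrt{n})$ with no additional work.

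For the \textbf{upper bound} I would describe the sensitive blocks of $\gm_n$ explicitly and exhibit a cheap fractional certificate. Fix a $0$-input $x$, so $\pVert*{x} = n/2 - \sqrt{n}$, and let $y$ be any opposite-value input, $\pVert*{y} = n/2 + \sqrt{n}$. Since the number of coordinates flipped $0 \to 1$ minus the number flipped $1 \to 0$ equals $\pVert*{y} - \pVert*{x} = 2\sqrt{n}$, the disagreement set $\{i : x_i \neq y_i\}$ contains at least $2\sqrt{n}$ of the zero-coordinates of $x$; in particular every sensitive block of $x$ contains $\ge 2\sqrt{n}$ zero-coordinates (the minimal ones being exactly the $2\sqrt{n}$-subsets of zeros, as in Lemma~\ref{oneTypeBlocksOnly}). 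Now assign the weight $v_i = 1/(2\sqrt{n})$ to each of the $n/2 + \sqrt{n}$ zero-coordinates of $x$ and $v_i = 0$ elsewhere. Every sensitive block $B$ then satisfies $\sum_{i \in B} v_i \ge 2\sqrt{n}\cdot \tfrac{1}{2\sqrt{n}} = 1$, so this is a valid fractional certificate, of total weight $(n/2 + \sqrt{n})/(2\sqrt{n}) = \Theta(\sqrt{n})$. Thus $\RC(\gm_n, x) = O(\sqrt{n})$, and by the symmetry $\gm_n(x) = 1 - \gm_n(\bar{x})$ the identical argument (weighting the one-coordinates) handles $1$-inputs, giving $\RC(\gm_n) = O(\sqrt{n})$.

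The only subtle point, and the step I expect to need the most care, is the upper bound: one must observe that it suffices to cover the \emph{minimal} sensitive blocks, because $\sum_{i\in B} v_i \ge 1$ and $B' \supseteq B$ automatically force $\sum_{i \in B'} v_i \ge 1$, so the uniform weighting handles all opposite-value inputs at once. Combined with the elementary weight-flip count ($\ge 2\sqrt{n}$ flipped zeros) and the standard equivalence $\RC = \Theta(\fbs)$ fixing the constant factors, the lemma follows; everything else is routine arithmetic on the block sizes.
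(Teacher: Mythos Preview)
Your proof is correct and essentially matches the paper's: both obtain the lower bound from $\bs(\gm_n)=\Theta(\sqrt{n})$ via $\bs\le\RC$, and both exhibit a uniform fractional certificate for the upper bound---the paper simply puts weight $z_i=1/\sqrt{n}$ on \emph{all} $n$ coordinates (using that any two opposite-value inputs differ in at least $2\sqrt{n}$ bits), whereas you put $1/(2\sqrt{n})$ only on the zero-coordinates, the same idea with a slightly sharper constant. Your parenthetical appeal to Lemma~\ref{oneTypeBlocksOnly} is unnecessary (and that lemma is stated only for total functions), but nothing in your argument actually depends on it.
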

\begin{proof} 

From~\cite{tal_properties_2013}, we know that $\RC(f) = \Theta(\FC(f))$ and we also know that $\FC(f) = max_{x \in \text{Dom(f)}} \FC(f,x)$

The linear program for $\FC(f,x)$ is:
\begin{center}
minimize $\Sigma_i z_{i}$ 
\\ subject to $ \sum_{i : x_{i} \neq y_{i}} z_{i} \geq 1 , \forall  y $ :  $f(x) \neq f(y)$ 
\\$z_{i} \in [0,1]$.
\end{center}

Using the linear program for $\FC(f,x)$, for any input $x$, $\forall i \in [n]$ we assign $z_{i}$ as $1/\sqrt{n}$. Since any two inputs $x$ and $y$, where $\gm_n(x)$ and $\gm_n(y)$ are different, differ in at least $2\sqrt{n}$ bits, the condition is satisfied and this assignment becomes a feasible assignment. The value of the objective function then becomes $\sqrt{n}$. Since it is a minimization program, we know that for any $x$, $\FC(\gm_n,x) \leq \sqrt{n}$ and hence $\FC(\gm_n) \leq \sqrt{n}$. \newline
This implies that $\RC(\gm_n) = O(\sqrt{n})$. From~\cite{tal_properties_2013}, we also know that $ \forall f, \bs(f) \leq \RC(f)$. Combining this with the Lemma~\ref{bsGapMaj}, we get that $\RC(\gm_n) = \Omega(\sqrt{n})$. Thus, $\RC(\gm_n) = \theta(\sqrt{n})$. 

\end{proof}

\begin{lemma}
\label{MMfGapMaj}
$\Adv(\gm_n) = \theta(\sqrt{n})$
\end{lemma}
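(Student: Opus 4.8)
The plan is to sandwich $\Adv(\gm_n)$ between $\Omega(\sqrt{n})$ and $O(\sqrt{n})$ by recycling facts already established for Gap Majority, rather than constructing a fresh adversary matrix from scratch. For the upper bound I would invoke the general inequality $\Adv(f) = O(Q_\epsilon(f))$ recorded in Section~\ref{PositiveAdv}: the positive adversary is a lower bounding technique, so up to constants it can never exceed the true quantum query complexity. Combining this with the counting-based upper bound $Q_\epsilon(\gm_n) = O(\sqrt{n})$ from Lemma~\ref{upperBoundGapMaj} immediately yields $\Adv(\gm_n) = O(\sqrt{n})$.

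For the lower bound, the key observation is that the Ambainis construction already carried out in the proof of Lemma~\ref{lowerBoundGapMaj} is precisely a feasible choice of the sets $X$, $Y$ and relation $R$ for the positive adversary, and the value it certifies, $\sqrt{mm'/(ll')}$, lower bounds $\Adv(\gm_n)$ itself (not merely $Q_\epsilon(\gm_n)$), since $\Adv$ is the optimum over all such adversary schemes, of which this relation is a special case. Thus I only need to confirm that the binomial ratio computed there is $\Omega(\sqrt{n})$. Using the identity $\binom{N}{k} = \frac{N}{k}\binom{N-1}{k-1}$ with $N = n/2 + \sqrt{n}$, together with the symmetry $m = m'$, each of the ratios $m/l$ and $m'/l'$ collapses to $\frac{n/2 + \sqrt{n}}{2\sqrt{n}} = \Theta(\sqrt{n})$, so that $\sqrt{mm'/(ll')} = \Theta(\sqrt{n})$ and hence $\Adv(\gm_n) = \Omega(\sqrt{n})$.

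There is essentially no hard step here: both directions reduce to results proved earlier in this section, and the only genuine calculation is the simplification of the ratio of binomial coefficients. The one point requiring care is the justification that the construction of Lemma~\ref{lowerBoundGapMaj} bounds $\Adv$ and not only $Q_\epsilon$; this is immediate from the definition of $\Adv$ as the supremum over all positive adversary schemes. Putting the two bounds together gives $\Adv(\gm_n) = \Theta(\sqrt{n})$, as claimed.
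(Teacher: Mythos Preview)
Your proposal is correct. The lower bound is handled identically to the paper: both invoke the Ambainis relation constructed in Lemma~\ref{lowerBoundGapMaj}, and you are right that this construction certifies $\Adv(\gm_n) = \Omega(\sqrt{n})$ directly, since the quantity $\sqrt{mm'/(ll')}$ is by definition a feasible value for the positive adversary.

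For the upper bound the paper takes a different, more self-contained route: rather than passing through $Q_\epsilon$, it exhibits an explicit feasible weight scheme for the minimax formulation $\MM(\gm_n)$, setting $w(x,i) = 1/\sqrt{n}$ for every $x$ and $i$. Since any two inputs with different function values differ in at least $2\sqrt{n}$ coordinates, the constraint $\sum_{i:x_i\neq y_i}\sqrt{w(x,i)w(y,i)} \geq 1$ holds, and the objective value is $n\cdot(1/\sqrt{n}) = \sqrt{n}$, giving $\Adv(\gm_n) = \MM(\gm_n) \leq \sqrt{n}$. Your route via $\Adv(f) = O(Q_\epsilon(f))$ and Lemma~\ref{upperBoundGapMaj} is equally valid and arguably more economical, since it recycles the algorithmic upper bound; the paper's argument, on the other hand, stays entirely within the adversary framework and produces an explicit optimal dual witness.
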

\begin{proof} 
 
 As seen in Definition~\ref{MMfDefinition} of  $\MM(f)$, assigning $w(x,i)$ to be $1/\sqrt{n}$ for all inputs $x$ and all indices $i$ is a feasible solution for $\MM(\gm_n)$. This is because the number of indices $i$ where the inputs $x$ and $y$ such that $f(x) \neq f(y)$ differ will be at least $2\sqrt{n}$ thus always satisfying the condition. The objective value for this weight scheme $w$ then becomes $\sqrt{n}$. Since $\MM(f)$ is a minimization program over all such feasible weight functions, $\MM(\gm_n) \leq \Theta(\sqrt{n})$. Thus, $\Adv(\gm_n) = O(\sqrt{n})$. 
 Combining this with Lemma~\ref{lowerBoundGapMaj}, we get $\Adv(\gm_n) = \theta(\sqrt{n})$. 

\end{proof}

\end{document}